\documentclass[preprint,12pt]{elsarticle}

\usepackage{amssymb}
\usepackage{amsmath}

\usepackage{graphicx}%
\usepackage{multirow}%
\usepackage{amsfonts}%
\usepackage{amsmath}
\usepackage{amssymb}
\usepackage{amsthm}
\usepackage{mathrsfs}%
\usepackage[utf8]{inputenc}
\usepackage{booktabs}
\usepackage{array}
\usepackage{geometry}
\usepackage{color}
\usepackage{xcolor}
\usepackage{float}
\usepackage{algorithm}
\usepackage{algpseudocode}
\usepackage{mathtools, nccmath}
\usepackage{hyperref}

\newtheorem{theorem}{Theorem}
\newtheorem{lemma}{Lemma}

\DeclareMathOperator*{\argmin}{argmin}
\definecolor{customorange}{RGB}{255,127,0}
\definecolor{ForestGreen}{RGB}{60, 170, 60}

\newcommand{\x}{\textbf{x}}

\journal{Journal of Computational Physics}

\begin{document}

\begin{frontmatter}



\title{Implicit Binarization via Complex Phase Dynamics in Combinatorial Optimization}


\author[inst1]{Khen Cohen\fnref{equal}}
\ead{khencohen@mail.tau.ac.il}

\author[inst2]{Mark Glass\fnref{equal}}
\ead{markglass@mail.tau.ac.il}


\author[inst2]{Meir Feder}

\author[inst1]{Yaron Oz}

\fntext[equal]{These authors contributed equally to this work.}

\affiliation[inst1]{organization={School of Physics and Astronomy, Tel Aviv University},
            city={Tel Aviv},
            country={Israel}}

\affiliation[inst2]{organization={School of Electrical and Computer Engineering, Tel Aviv University},
            city={Tel Aviv},
            country={Israel}}

\begin{abstract}
We introduce a physics-inspired continuous relaxation framework that yields substantially improved solutions for NP-hard combinatorial optimization problems, including Quadratic Unconstrained Binary Optimization (QUBO), binary sparse coding, and planted-solution Ising models. By parameterizing discrete binary variables as continuous wave-like states on the complex unit circle, we inherently smooth highly non-convex energy landscapes. We show that representing binary variables as complex phases reveals an implicit regularization mechanism that promotes convergence toward discrete states. Extracting this mechanism yields significant improvements even within standard real-valued optimization frameworks, using this regularizer explicitly. Empirically, this regularization yields vastly higher ground-state convergence rates than standard real-valued alternatives. Our models achieved zero error in large-scale $160 \times 160$ QUBO tasks under severe noise ($\sigma=0.25$), and outperformed traditional algorithms (OMP and LASSO) in underdefined sparse coding with perfect recovery at $\sigma=0.15$. The solver's robustness was further validated by recovering exact ground-state configurations in $8$ out of $11$ rigorously engineered planted-solution benchmarks.
\end{abstract}

\begin{keyword}
Ising model \sep Combinatorial optimization \sep Continuous relaxation \sep Complex-valued optimization \sep Quadratic Unconstrained Binary Optimization (QUBO) \sep Sparse coding \sep Planted Solutions
\end{keyword}

\end{frontmatter}

\section{Introduction}

Combinatorial optimization is a foundational challenge in fields 
ranging from operations research \cite{korte2012combinatorial} to cryptography and statistical physics \cite{mezard2009information}. These problems involve finding an optimal configuration from a finite, discrete set of objects - a task that often proves computationally intractable due to the exponential explosion of the search space. Many of these challenges are classified as NP-hard \cite{karp1972reducibility}, necessitating the development of sophisticated heuristic, continuous, or physics-inspired strategies to find high-quality solutions within reasonable computational limits. The difficulty and fundamental importance of these challenges can be illustrated through three typical problem classes:

\textbf{Quadratic Unconstrained Binary Optimization (QUBO)} has emerged as a unifying mathematical framework for formulating a vast array of combinatorial challenges, including the traveling salesperson problem, Max-Cut, and graph coloring \cite{kochenberger2014unconstrained}. In QUBO, the objective is to minimize a quadratic polynomial over binary variables. While the formulation is mathematically simple, the energy landscape is highly non-convex. Recently, quantum and quantum-inspired approaches have shown promise for addressing QUBO formulations - including adaptations for binary sparse coding \cite{romano2024quantum} - demonstrating superior performance compared to alternative classical methods, particularly in small sample size systems.

\textbf{Sparse coding} stands as one of the most fundamental problems in signal processing, with a wide range of applications, including post-quantum encryption schemes \cite{regev2005lattices}, imaging systems \cite{yoon2023uncoverImaging, bratke2019acceleratedMRI}, communication channels \cite{yu2021sparseComm}, and radar systems \cite{herman2008compressedRadar}. In sparse coding, the goal is to recover an unknown sparse vector from a noisy linear system. This computational challenge becomes particularly acute for binary sparse coding, where the underlying signal takes only discrete values \cite{natarajan1995sparse,davis1997adaptive}. This significantly constrains the feasible solution space and renders traditional convex relaxation approaches inadequate.

\textbf{Planted Solutions} provide a rigorous paradigm for evaluating optimization algorithms by utilizing engineered Ising problems with predefined, verifiable ground states \cite{Hen2015}. Unlike purely random instances where the true global minimum is computationally intractable to verify, planted-solution Hamiltonians deliberately encode a specific target configuration, such as a set of logical constraints or factorization rules. This exact \textit{a priori} knowledge yields an unambiguous metric for success, allowing researchers to definitively ascertain whether an algorithm, such as ours, reached the absolute optimum rather than stalling in a local minimum. Because they generate structured energy landscapes that are significantly more informative than fully random spin glasses, planted constructions - ranging from frustration-based instances \cite{Hen2015} to equation planting \cite{Hen2019} and XORSAT benchmarks \cite{Kowalsky2022} - have become central to systematically evaluating and comparing the performance of both classical and quantum optimization devices.

A widely adopted approach to tackle these combinatorial problems is to apply continuous convex relaxation techniques. For instance, in sparse optimization, the mathematically intractable $\ell_0$-norm is frequently substituted with the $\ell_1$-norm, leading to the Least Absolute Shrinkage and Selection Operator (LASSO) \cite{tibshirani1996regression}. While LASSO provides computational tractability through convex programming, empirical and theoretical evidence demonstrates that it suffers from fundamental limitations in strictly combinatorial scenarios, often exhibiting suboptimal recovery performance when dealing with heavily quantized or discrete signals \cite{candes2008enhancing}. Continuous relaxations often fail to tightly bound the discrete feasible region, resulting in fractional solutions that require heuristic rounding steps, which can severely degrade the quality of the final discrete assignment.

To overcome the limitations of real-valued relaxations, researchers are increasingly mapping discrete variables into higher-dimensional, physically-inspired spaces. Complex numbers, for instance, naturally arise in physical contexts like electromagnetic waves, radar, and quantum mechanics, where phase and magnitude encapsulate coupled dynamics \cite{born_wolf_1999, nielsen_chuang_2010}. Extending this logic, quaternion algebra provides an even higher-dimensional framework \cite{sangwine2000colour, ell2007quaternion, mandic2011quaternion}. By mapping discrete assignments to the phases of complex numbers, one can exploit continuous symmetries and wave-like properties to navigate the optimization landscape more effectively. The complex phase representation constrains the optimization trajectory to a curved manifold, introducing an effective penalty that suppresses deviations from binary states. However, we emphasize that the complex representation itself is not the primary source of improved performance. Rather, we show that representing binary variables as complex phases reveals an implicit regularization mechanism that promotes convergence toward discrete states, and that can be used explicitly.

In this paper, we propose this physics-inspired approach to solve combinatorial optimization problems. We tackle the three fundamental problems mentioned above and demonstrate that by converting the problem into an Ising model formulation and adopting a physical, complex-wave interpretation for each discrete bit, we expose this underlying implicit regularization mechanism. Extracting this mechanism yields improved performance even within standard real-valued optimization frameworks, revealing a substantial performance boost over contemporary alternative methods. Furthermore, we extend this formulation to higher-dimensional algebraic spaces. We mathematically and empirically demonstrate that the 2D complex manifold is both necessary and sufficient to induce this topological regularization.

The remainder of this paper is structured as follows. In Section~\ref{sec:problemsformulation} we define the three optimization problems of interest that we deal with in this work: QUBO, Sparse Coding, and Planted Solutions. In Section~\ref{sec:formulation}, we introduce the theoretical framework, the physics inspired approach and the algorithm (Alg.\ref{alg:training_unified}).
Section~\ref{sec:TestRes} presents the simulation results of all the cases. We then discuss our findings in Section~\ref{sec:discussion}, where we also highlight several open questions and directions for future research. 
This paper is supported by four Appendices with proofs and additional numerical examples.

\section{The Optimization Problems}
\label{sec:problemsformulation}

In this section, we formulate the three underlying optimization problems, explain the relationship between them, and review their computational complexity. The first problem is the Quadratic Unconstrained Binary Optimization (QUBO) characterized by a square measurement matrix. The second problem, sparse coding, is presented as an underdefined variation with a known cardinality of the first problem. Finally, the third problem involves a set of eleven planted-solution Ising models of various sizes with known ground states, which serve as rigorous benchmarks.

\subsection{Quadratic Unconstrained Binary Optimization}
\label{sub:formQUBO}
In our QUBO formulation, the goal is to recover an unknown binary vector $\mathbf{x} \in \{0,1\}^N$ from noisy linear measurements $\mathbf{b}$ governed by the following system:
\begin{equation} \label{eq:qubo_system}
    \mathbf{b} = \mathbf{A}\mathbf{x} + \mathbf{e} \ ,
\end{equation}
where the square measurement matrix $\mathbf{A} \in \mathbb{R}^{N \times N}$ satisfying $\det(\mathbf{A}) = 1$ is known, and $\mathbf{e} \sim \mathcal{N}(0, \sigma^2 \mathbf{I})$ is an unknown random Gaussian noise vector with standard deviation $\sigma$ (e.g., $\sigma = 0.1$). 

For a noise standard deviation equal (or close enough) to zero, the unknown vector $\mathbf{x}$ can be easily recovered by multiplying the measurement vector $\mathbf{b}$ by the inverse of the matrix $\mathbf{A}$ and rounding to the nearest discrete values:
\begin{equation} \label{eq:ml}
    \hat{\mathbf{x}} = \lfloor \mathbf{A}^{-1}\mathbf{b} \rceil .\nonumber
\end{equation}
However, starting from a certain noise threshold, this naive approach no longer yields the correct answer. Finding the optimal solution then requires an exhaustive search to minimize the squared Euclidean distance, which entails exponential computational complexity.
\begin{equation} \label{eq:ml}
    \hat{\mathbf{x}} = \argmin_{\mathbf{x} \in \{0,1\}^N} \|\mathbf{A}\mathbf{x} - \mathbf{b}\|_2^2 \ .
\end{equation}
For instance, for the problem size of $N=160$ evaluated later in this work, the optimal solution would require testing $2^{160}$ different combinations.
Expanding this objective cost function yields $\mathbf{x}^T (\mathbf{A}^T \mathbf{A}) \mathbf{x} - 2\mathbf{b}^T \mathbf{A} \mathbf{x}$, which is mathematically equivalent to the standard QUBO formulation.

\subsection{Sparse Coding Problem}
\label{sub:formSC}
The sparse coding problem can be regarded as a variation of the previously defined QUBO model. It relies on the identical linear measurement system established in Eq. (\ref{eq:qubo_system}), but differs fundamentally in the shape of the matrix and the prior structural knowledge regarding the signal. Here, the problem is underdefined: the measurement matrix $\mathbf{A} \in \mathbb{R}^{M \times N}$ contains significantly fewer rows than columns ($M \ll N$) and satisfies $\det(\mathbf{A}\mathbf{A}^T) = 1$. 

To compensate for the underdefined nature of the system, we are provided with additional information about the target vector. The \textit{cardinality}, denoted by $C$, is defined as the exact number of non-zero elements in the vector, mathematically represented by the $\ell_0$ pseudo-norm ($\|\mathbf{x}\|_0 = C$).

The traditional approach to estimating the correct solution modifies the objective from Eq. (\ref{eq:ml}) by optimizing over two loss components - the standard data fidelity and a sparsity regularizer:
\begin{equation} \label{eq:sparseLoss}
    \hat{\mathbf{x}} = \argmin_{\mathbf{x} \in \{0,1\}^N} \  \|\mathbf{A}\mathbf{x} - \mathbf{b}\|_2^2 + \lambda \|\mathbf{x}\|_0 \ ,
\end{equation}
where the first term is the \textit{Data} term, and the second is the \textit{Sparsity} term. $\lambda$ is a hyperparameter balancing the two objectives. In our specific case, where the target cardinality $C$ is strictly known, $\lambda$ is dynamically tuned to explicitly enforce $\|\hat{\mathbf{x}}\|_0 = C$. 

\subsection{Planted Solution Ising Problem}
\label{sub:formPS}
To benchmark our algorithm, we employ a complementary planted construction derived from binary integer multiplication. We select two prime numbers, $p$ and $q$, whose binary digits are denoted by $p_i, q_j \in \{0,1\}$. We represent these primes and compute their semiprime product $N = pq$ according to the following polynomial expansions:
\begin{gather}
    p = \sum_{i=0}^{n_p-1} p_i \, 2^i, \qquad q = \sum_{j=0}^{n_q-1} q_j \, 2^j, \\
    N = pq = \sum_{k=0}^{n_p+n_q-2} N_k \, 2^k.
\end{gather}

The objective is to construct an Ising Hamiltonian whose ground state exactly encodes the bit strings $(p,q)$ satisfying these multiplicative constraints.
A detailed description of the pipeline that constructs these problems is provided in \cite{hen2026plantedsolutionsatisingbenchmarks}.



\section{Solver Based on Physics Informed Relaxation}
\label{sec:formulation}

In this section, we establish a physics-informed framework for solving the optimization problems presented previously. We do so by formulating the problem as an Ising model and associating it with four distinct algebraic structures for parameterizing the continuous state space: real, complex, spherical, and quaternion. Each model relaxes the discrete binary variables into a higher-dimensional continuous manifold, as illustrated in Fig. \ref{fig:MainDiagram}.

\begin{figure}[H]
    \centering
    \includegraphics[width=1.0\linewidth]{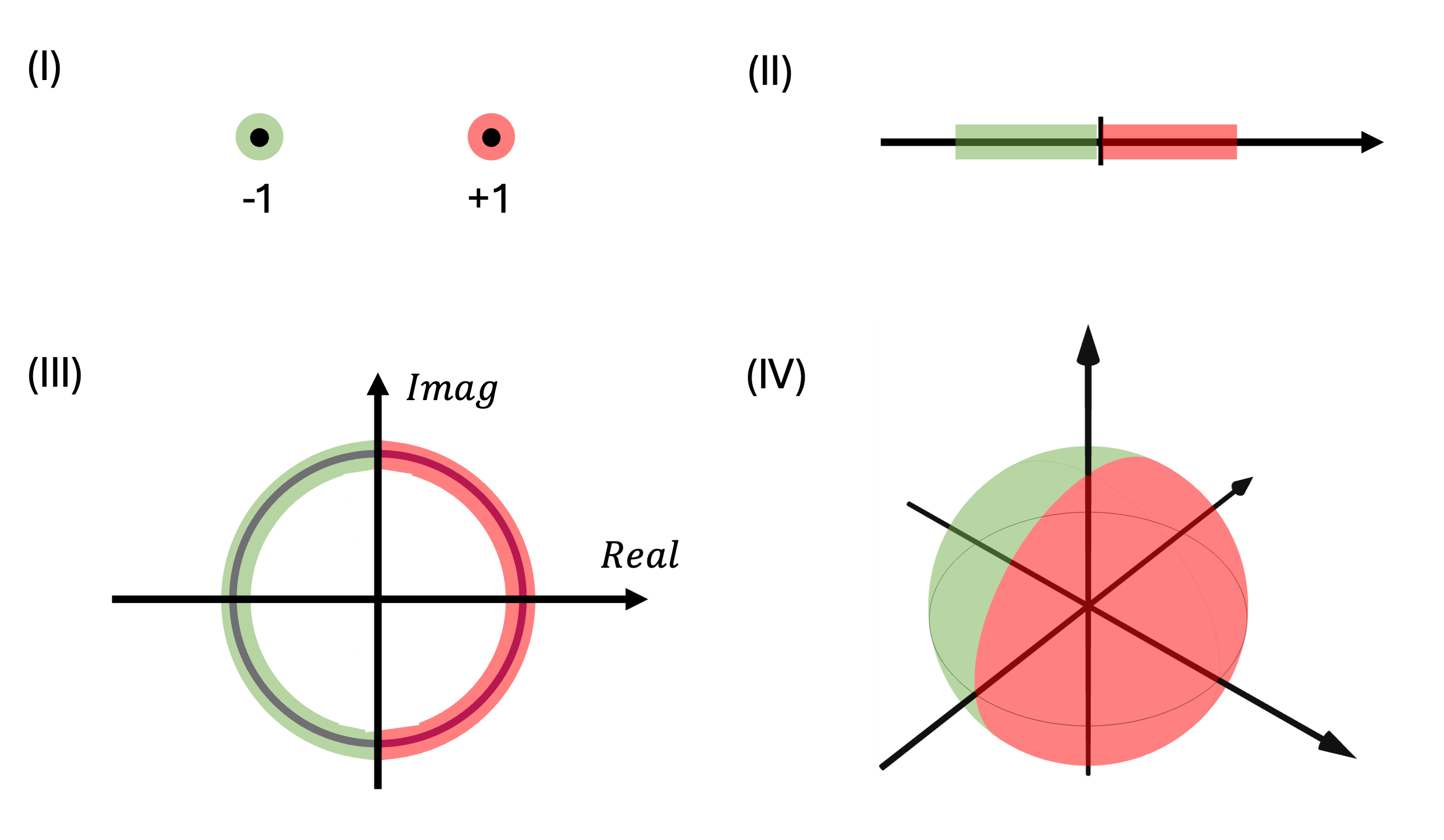}
    \caption{Visualization of continuous relaxation. (I) A binary assignment is used, taking discrete values in $\{\pm1\}$. (II) Optimization is carried out over the real line $\mathbb{R}$; at the end of the procedure, the final discrete value is selected based on its color region. (III) Optimization is performed on the complex manifold $C^1$, followed by the same rounding step based on the color region. (IV) Optimization over a 3D sphere.}
    \label{fig:MainDiagram}
\end{figure}

\subsection{Complex Field Relaxation}
\label{sub:complexRelaxation}
Here, we focus on the QUBO formulation as a case study to analyze the specific effects of the complex representation - we subsequently apply this framework to the remaining problem classes to demonstrate generalized performance improvements. We start by establishing a baseline using the standard real-valued relaxation, where the binary variable is parameterized continuously as $\mathbf{x} = \cos(\boldsymbol{\theta})$ (note it is written in a vectorial notation, i.e. $x_i = \cos(\theta_i)$). This leads to the unconstrained minimization of the data fidelity:
\begin{equation}
\label{eq:optimization}
    \min_{\boldsymbol{\theta} \in \mathbb{R}^N} \|\mathbf{A}\cos(\boldsymbol{\theta}) - \mathbf{b} \|_2^2 \ .
\end{equation}

We extend this formulation by projecting the problem into the complex plane, introducing a unit-modulus complex variable $\mathbf{z} = e^{i\boldsymbol{\theta}} \in \mathbb{C}^N$. The unregularized optimization problem in the complex field then becomes:
\begin{align}
\label{eq:quboLoss}
    \min_{\boldsymbol{\theta} \in \mathbb{R}^N} \|\mathbf{A} e^{i\boldsymbol{\theta}} - \mathbf{b} \|_2^2 
    &= \min_{\boldsymbol{\theta} \in \mathbb{R}^N} \left\| (\mathbf{A}\cos(\boldsymbol{\theta}) - \mathbf{b}) + i (\mathbf{A}\sin(\boldsymbol{\theta})) \right\|_2^2 \nonumber \\
    &= \min_{\boldsymbol{\theta} \in \mathbb{R}^N} \underbrace{\|\mathbf{A}\cos(\boldsymbol{\theta}) - \mathbf{b} \|_2^2}_{\mathcal{L}_{\text{data}}} + \underbrace{\|\mathbf{A}\sin(\boldsymbol{\theta})\|_2^2}_{\mathcal{R}} \ .
\end{align}

The resulting objective function naturally decomposes into the standard real-valued data fidelity term, $\mathcal{L}_{\text{data}}$, and a secondary residual term, $\mathcal{R}$. The key observation is that the complex formulation introduces this additional residual term, which acts as an implicit regularizer that biases the solution toward discrete binary states. Importantly, this regularization can be isolated and applied independently of the complex representation. As shown analytically in \ref{app:complexReg}, provided the matrix $\mathbf{A}$ satisfies standard sub-Gaussian ensemble properties, minimizing $\mathcal{R}$ intrinsically drives this discrete convergence. We note, however, that the effectiveness of this binarization mechanism is tied to the spectral structure of $\mathbf{A}$, and its strength may diminish in the presence of strong column correlations or highly anisotropic singular values. Interestingly, we further demonstrate that projecting the problem into higher-dimensional spaces beyond the complex field yields no additional regularization terms. We further derive the corresponding loss terms in the presence of a LASSO regularizer in \ref{app:LassoReg}.

\subsection{Theoretical Characterization of Implicit Binarization}
\label{sec:binarization}

In this subsection we formalize the binarization mechanism induced by the
residual term in Eq.~(\ref{eq:quboLoss}). We work with the objective
\begin{equation}
\label{eq:total-loss}
L(\boldsymbol{\theta}) \;=\;
\underbrace{\|\mathbf{A}\cos\boldsymbol{\theta} - \mathbf{b}\|_2^2}_{L_{\mathrm{data}}(\boldsymbol{\theta})}
\;+\;
\underbrace{\|\mathbf{A}\sin\boldsymbol{\theta}\|_2^2}_{R(\boldsymbol{\theta})},
\qquad \boldsymbol{\theta}\in\mathbb{R}^N.
\end{equation}
For each $\boldsymbol{\theta}$ define the nearest binary projection in phase
space,
\begin{equation}
\label{eq:projection}
\tilde{\theta}_i \;=\; \pi\cdot\mathrm{round}\!\left(\theta_i/\pi\right),
\qquad
\delta_i \;=\; \theta_i - \tilde{\theta}_i \in\bigl[-\tfrac{\pi}{2},\,\tfrac{\pi}{2}\bigr].
\end{equation}
Since $\cos\tilde{\theta}_i = \pm 1$, the configuration $\tilde{\boldsymbol{\theta}}$
is binary in the original variables. Our goal is to compare
$L(\boldsymbol{\theta})$ to $L(\tilde{\boldsymbol{\theta}})$ and identify the
regime in which $L(\tilde{\boldsymbol{\theta}}) \le L(\boldsymbol{\theta})$,
so that no continuous configuration can improve on its binary projection.
We denote by $\sigma_{\min}$ and $\sigma_{\max}$ the smallest and largest
singular values of $\mathbf{A}$.

\subsubsection{Sharp control of the data-term increment}

The key technical observation is that $\cos$ is \emph{flat} at multiples of
$\pi$, so the data-term increment $|\cos\theta_i - \cos\tilde{\theta}_i|$
is quadratic, not linear, in $\delta_i$. This is precisely the regime where
the binarization mechanism operates, and it makes the regularizer dominate
the data-term sensitivity.

\begin{lemma}[Quadratic flatness of cosine at $\pi\mathbb{Z}$]
\label{lem:quadratic-flatness}
For every $\boldsymbol{\theta} \in \mathbb{R}^N$ and each coordinate $i$,
\begin{equation}
\bigl|\cos\theta_i - \cos\tilde\theta_i\bigr| \;\le\; \sin^2\theta_i.
\label{eq:cos-bound}
\end{equation}
\end{lemma}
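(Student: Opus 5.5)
We reduce everything to the single scalar offset $\delta_i=\theta_i-\tilde\theta_i\in[-\tfrac{\pi}{2},\tfrac{\pi}{2}]$ from Eq.~(\ref{eq:projection}). Write $\tilde\theta_i=k\pi$ with $k\in\mathbb{Z}$. The angle-addition formulas give
\begin{equation*}
\cos\theta_i=\cos(k\pi+\delta_i)=(-1)^k\cos\delta_i,
\qquad
\sin\theta_i=(-1)^k\sin\delta_i ,
\end{equation*}
and $\cos\tilde\theta_i=(-1)^k$. The sign $(-1)^k$ therefore drops out of both sides. It remains to prove the scalar inequality
\begin{equation*}
\bigl|1-\cos\delta\bigr|\le \sin^2\delta \qquad\text{for } \delta\in\bigl[-\tfrac{\pi}{2},\tfrac{\pi}{2}\bigr].
\end{equation*}

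On this interval $\cos\delta\in[0,1]$, so $1-\cos\delta\ge 0$ and the absolute value can be dropped. Factor the right-hand side as $\sin^2\delta=1-\cos^2\delta=(1-\cos\delta)(1+\cos\delta)$. The inequality then becomes
\begin{equation*}
(1-\cos\delta)\le(1-\cos\delta)(1+\cos\delta),
\quad\text{i.e.}\quad
0\le(1-\cos\delta)\cos\delta .
\end{equation*}
This holds because both factors are nonnegative when $|\delta|\le\pi/2$.

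The only point needing care is the choice of branch at the tie $\theta_i/\pi\in\tfrac12+\mathbb{Z}$. There $\delta_i=\pm\pi/2$, so $\cos\delta_i=0$ and both sides equal $1$, whichever way the rounding breaks. The bound therefore holds with equality and the rounding convention is irrelevant.

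The key ingredient is $\cos\delta_i\ge 0$, which the nearest-multiple projection guarantees. The same bound would fail if $\tilde\theta_i$ were not the nearest multiple of $\pi$: for $|\delta|>\pi/2$ we have $(1-\cos\delta)\cos\delta<0$. For context, the same factorization shows that the increment is $\tfrac12\delta_i^2+O(\delta_i^4)$. This explains the "quadratic flatness" in the lemma's name, and it is what Eq.~(\ref{eq:cos-bound}) captures via $\sin^2\theta_i$.
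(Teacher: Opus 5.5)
Your proof is correct and is essentially the paper's argument: after the same reduction to $\delta_i$, the paper writes $1-\cos\delta_i=\sin^2\delta_i/\bigl(2\cos^2(\delta_i/2)\bigr)$ and uses $2\cos^2(\delta_i/2)\ge 1$ on $|\delta_i|\le\pi/2$. Since $2\cos^2(\delta/2)=1+\cos\delta$, this is exactly your factorization $\sin^2\delta=(1-\cos\delta)(1+\cos\delta)$ together with $1+\cos\delta\ge 1$. One minor point, which does not affect the proof: your closing remark that the bound fails for every $|\delta|>\pi/2$ is overstated. It fails only when the chosen multiple of $\pi$ has the wrong parity; for example, $\delta$ near $2\pi$ leaves $\cos\tilde\theta_i$ unchanged and the bound holds.
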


\begin{proof}
Since $\tilde\theta_i = k_i\pi$ for some integer $k_i$, we have
$\cos\tilde\theta_i = (-1)^{k_i}$ and $\cos\theta_i = (-1)^{k_i}\cos\delta_i$,
where $\delta_i = \theta_i - \tilde\theta_i \in [-\pi/2,\, \pi/2]$. Hence
\[
\bigl|\cos\theta_i - \cos\tilde\theta_i\bigr| \;=\; 1 - \cos\delta_i.
\]
Applying the double-angle identity in two equivalent forms,
\[
1 - \cos\delta_i \;=\; 2\sin^2(\delta_i/2) \;=\; \frac{\sin^2\delta_i}{2\cos^2(\delta_i/2)}.
\]
On the interval $\delta_i \in [-\pi/2,\, \pi/2]$ we have $|\delta_i/2| \le \pi/4$, so
$\cos^2(\delta_i/2) \ge \cos^2(\pi/4) = 1/2$. Therefore
\[
1 - \cos\delta_i \;\le\; \sin^2\delta_i \;=\; \sin^2\theta_i,
\]
where the final equality uses $|\sin\theta_i| = |\sin\delta_i|$, which follows from
$\theta_i = \tilde\theta_i + \delta_i$ with $\tilde\theta_i \in \pi\mathbb{Z}$.
\end{proof}

\begin{lemma}[Data-term increment]
\label{lem:data-increment}
For every $\boldsymbol{\theta} \in \mathbb{R}^N$,
\begin{equation}
\|\mathbf{A}(\cos\boldsymbol{\tilde\theta} - \cos\boldsymbol{\theta})\|_2^2 \;\le\; \sigma_{\max}^2\,\|\sin\boldsymbol{\theta}\|_\infty^2\,\|\sin\boldsymbol{\theta}\|_2^2.
\label{eq:data-increment}
\end{equation}
Moreover,
\begin{equation}
\bigl|L_{\mathrm{data}}(\boldsymbol{\tilde\theta}) - L_{\mathrm{data}}(\boldsymbol{\theta})\bigr|
\;\le\; \|\mathbf{A}(\cos\boldsymbol{\tilde\theta} - \cos\boldsymbol{\theta})\|_2 \Bigl(2\,\|\mathbf{A}\cos\boldsymbol{\theta} - \mathbf{b}\|_2 + \|\mathbf{A}(\cos\boldsymbol{\tilde\theta} - \cos\boldsymbol{\theta})\|_2\Bigr).
\label{eq:data-difference}
\end{equation}
\end{lemma}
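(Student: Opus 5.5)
The plan is to prove the two inequalities separately: the first by a coordinatewise estimate fed through the operator norm of $\mathbf{A}$, the second by a purely algebraic expansion of a difference of squared norms.

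\emph{First inequality.} Set $\mathbf{v} = \cos\boldsymbol{\tilde\theta} - \cos\boldsymbol{\theta}$.

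1. By the definition of $\sigma_{\max}$ as the operator norm, $\|\mathbf{A}\mathbf{v}\|_2^2 \le \sigma_{\max}^2\|\mathbf{v}\|_2^2$.

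2. By Lemma~\ref{lem:quadratic-flatness}, each coordinate satisfies $|v_i| \le \sin^2\theta_i$.

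3. Split one factor of $\sin^2\theta_i$ off:
\[
v_i^2 \le \sin^4\theta_i = \sin^2\theta_i\cdot\sin^2\theta_i \le \|\sin\boldsymbol{\theta}\|_\infty^2\,\sin^2\theta_i .
\]

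4. Summing over $i$ gives $\|\mathbf{v}\|_2^2 \le \|\sin\boldsymbol{\theta}\|_\infty^2\,\|\sin\boldsymbol{\theta}\|_2^2$. Combining with step 1 yields \eqref{eq:data-increment}.

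\emph{Second inequality.} Write $\mathbf{r} = \mathbf{A}\cos\boldsymbol{\theta} - \mathbf{b}$ and $\mathbf{w} = \mathbf{A}\mathbf{v}$, so that $\mathbf{A}\cos\boldsymbol{\tilde\theta} - \mathbf{b} = \mathbf{r} + \mathbf{w}$.

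1. Expand the square:
\[
L_{\mathrm{data}}(\boldsymbol{\tilde\theta}) - L_{\mathrm{data}}(\boldsymbol{\theta}) = \|\mathbf{r}+\mathbf{w}\|_2^2 - \|\mathbf{r}\|_2^2 = 2\langle \mathbf{r},\mathbf{w}\rangle + \|\mathbf{w}\|_2^2 .
\]

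2. Apply the triangle inequality and Cauchy--Schwarz:
\[
\bigl|L_{\mathrm{data}}(\boldsymbol{\tilde\theta}) - L_{\mathrm{data}}(\boldsymbol{\theta})\bigr| \le 2\|\mathbf{r}\|_2\|\mathbf{w}\|_2 + \|\mathbf{w}\|_2^2 = \|\mathbf{w}\|_2\bigl(2\|\mathbf{r}\|_2 + \|\mathbf{w}\|_2\bigr).
\]
This is exactly \eqref{eq:data-difference}.

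There is no real obstacle here. The only substantive input is the quadratic bound of Lemma~\ref{lem:quadratic-flatness}, which already handles the delicate flatness of cosine at $\pi\mathbb{Z}$. The one point requiring care is the ordering in step 3 of the first part: one factor of $\sin^2\theta_i$ must be bounded by its sup norm so that the $\ell_2$ norm of $\sin\boldsymbol{\theta}$ remains. That $\|\sin\boldsymbol{\theta}\|_2^2$ factor matches the regularizer $R(\boldsymbol{\theta}) \ge \sigma_{\min}^2\|\sin\boldsymbol{\theta}\|_2^2$ for the later comparison.
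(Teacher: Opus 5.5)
Your proposal is correct and follows the paper's proof essentially step for step: you use the coordinatewise bound $(\cos\tilde\theta_i-\cos\theta_i)^2\le\sin^4\theta_i\le\|\sin\boldsymbol{\theta}\|_\infty^2\sin^2\theta_i$ followed by the operator-norm bound for the first inequality. The paper proves the second inequality with the factorization $\|\mathbf{u}\|_2^2-\|\mathbf{v}\|_2^2=(\mathbf{u}-\mathbf{v})\cdot(\mathbf{u}+\mathbf{v})$; your direct expansion $2\langle\mathbf{r},\mathbf{w}\rangle+\|\mathbf{w}\|_2^2$ followed by Cauchy--Schwarz and the triangle inequality is the same computation written slightly differently.
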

\begin{proof}
By Lemma~\ref{lem:quadratic-flatness}, $(\cos\tilde\theta_i - \cos\theta_i)^2 \le \sin^4\theta_i$
coordinatewise, so
\[
\|\cos\boldsymbol{\tilde\theta} - \cos\boldsymbol{\theta}\|_2^2
\;\le\; \sum_{i=1}^N \sin^4\theta_i
\;=\; \|\sin\boldsymbol{\theta}\|_4^4
\;\le\; \|\sin\boldsymbol{\theta}\|_\infty^2\,\|\sin\boldsymbol{\theta}\|_2^2.
\]
Applying the operator-norm bound $\|\mathbf{A}\mathbf{v}\|_2 \le \sigma_{\max}\|\mathbf{v}\|_2$ yields
\eqref{eq:data-increment}.
For \eqref{eq:data-difference}, set $\mathbf{u} = \mathbf{A}\cos\boldsymbol{\tilde\theta} - \mathbf{b}$ and
$\mathbf{v} = \mathbf{A}\cos\boldsymbol{\theta} - \mathbf{b}$, so that $\mathbf{u} - \mathbf{v} = \mathbf{A}(\cos\boldsymbol{\tilde\theta} - \cos\boldsymbol{\theta})$.
The algebraic identity $\|\mathbf{u}\|_2^2 - \|\mathbf{v}\|_2^2 = (\mathbf{u}-\mathbf{v})\cdot(\mathbf{u}+\mathbf{v})$ together with
Cauchy--Schwarz gives
\[
\bigl|\,\|\mathbf{u}\|_2^2 - \|\mathbf{v}\|_2^2\,\bigr|
\;\le\; \|\mathbf{u}-\mathbf{v}\|_2\,\|\mathbf{u}+\mathbf{v}\|_2
\;\le\; \|\mathbf{u}-\mathbf{v}\|_2\,\bigl(2\|\mathbf{v}\|_2 + \|\mathbf{u}-\mathbf{v}\|_2\bigr),
\]
which is \eqref{eq:data-difference}.
\end{proof}

\subsubsection{Local binarization}

We can now state the main result of this subsection. It says that any
configuration sufficiently close to the binary set is improved by
projecting onto it, with the improvement quantified by the residual norm.

\begin{theorem}[Local binarization]
\label{thm:local-binarization}
Assume $\sigma_{\min} > 0$ (so $\mathbf{A}$ has trivial null space; this covers the square nonsingular and overdetermined regimes). For every $\boldsymbol{\theta} \in \mathbb{R}^N$ with $\sin\boldsymbol{\theta} \ne 0$,
define the threshold
\begin{equation}
\eta(\boldsymbol{\theta}) \;=\; \frac{\sigma_{\min}^2}{2\,\sigma_{\max}\!\left(\sigma_{\max} + 2\,\|\mathbf{A}\cos\boldsymbol{\theta} - \mathbf{b}\|_2 \big/ \|\sin\boldsymbol{\theta}\|_2\right)}.
\label{eq:threshold}
\end{equation}

\begin{enumerate}
\item[(i)] If $\sin\boldsymbol{\theta} \ne 0$ and $\|\sin\boldsymbol{\theta}\|_\infty \le \eta(\boldsymbol{\theta})$, then
\begin{equation}
L(\tilde{\boldsymbol{\theta}}) \;\le\; L(\boldsymbol{\theta}) - \tfrac{1}{2}\,\sigma_{\min}^2\,\|\sin\boldsymbol{\theta}\|_2^2.
\label{eq:descent}
\end{equation}

\item[(ii)] Define
\[
B \;=\; \bigl\{\boldsymbol{\theta} \in \mathbb{R}^N : \sin\boldsymbol{\theta} = 0\bigr\}
\;\cup\; \bigl\{\boldsymbol{\theta} \in \mathbb{R}^N : \sin\boldsymbol{\theta} \ne 0 \text{ and } \|\sin\boldsymbol{\theta}\|_\infty \le \eta(\boldsymbol{\theta})\bigr\}.
\]
Any global minimizer of $L$ on $\mathbb{R}^N$ that lies in $B$ is binary (i.e., satisfies $\sin\boldsymbol{\theta} = 0$).
\end{enumerate}
\end{theorem}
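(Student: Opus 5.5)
The plan is to write $L(\tilde{\boldsymbol{\theta}}) - L(\boldsymbol{\theta})$ as the sum of a data-term change and a regularizer change. The regularizer change is a clean gain. The data-term change is a loss that, thanks to Lemma~\ref{lem:data-increment}, is of higher order in $\|\sin\boldsymbol{\theta}\|_\infty$. The threshold $\eta(\boldsymbol{\theta})$ is then exactly what is needed for the gain to beat the loss with half of the gain to spare.

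\textbf{Step 1 (regularizer).} Since $\sin\tilde\theta_i = 0$ for every $i$, we have $R(\tilde{\boldsymbol{\theta}}) = 0$. Because $\sigma_{\min}>0$, we also have $R(\boldsymbol{\theta}) = \|\mathbf{A}\sin\boldsymbol{\theta}\|_2^2 \ge \sigma_{\min}^2 s^2$, where $s := \|\sin\boldsymbol{\theta}\|_2 > 0$. Hence
\[
L(\tilde{\boldsymbol{\theta}}) - L(\boldsymbol{\theta}) \le \bigl(L_{\mathrm{data}}(\tilde{\boldsymbol{\theta}}) - L_{\mathrm{data}}(\boldsymbol{\theta})\bigr) - \sigma_{\min}^2 s^2 .
\]

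\textbf{Step 2 (data term).} Let $m := \|\sin\boldsymbol{\theta}\|_\infty$ and $r := \|\mathbf{A}\cos\boldsymbol{\theta}-\mathbf{b}\|_2$. By \eqref{eq:data-increment}, $d := \|\mathbf{A}(\cos\tilde{\boldsymbol{\theta}} - \cos\boldsymbol{\theta})\|_2 \le \sigma_{\max} m s$. Substituting into \eqref{eq:data-difference} gives
\[
L_{\mathrm{data}}(\tilde{\boldsymbol{\theta}}) - L_{\mathrm{data}}(\boldsymbol{\theta}) \le d(2r + d) \le \sigma_{\max} m s\,(2r + \sigma_{\max} m s).
\]
To reach the stated form we need $\sigma_{\max} m s(2r+\sigma_{\max} m s) \le \tfrac12\sigma_{\min}^2 s^2$. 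After dividing by $s^2$, this reads $\sigma_{\max} m (2r/s + \sigma_{\max} m) \le \tfrac12 \sigma_{\min}^2$.

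\textbf{Step 3 (matching the threshold).} This is the only delicate point. The quadratic term carries a factor $m$ inside the parenthesis, whereas $\eta$ has plain $\sigma_{\max}$ there. The fix is to use $m \le 1$, which holds because $m$ is a sup of $|\sin|$. Then $\sigma_{\max} m \le \sigma_{\max}$, so
\[
\sigma_{\max} m\,(2r/s + \sigma_{\max} m) \le m\,\sigma_{\max}(\sigma_{\max} + 2r/s) \le \eta(\boldsymbol{\theta})\,\sigma_{\max}(\sigma_{\max} + 2r/s) = \tfrac12\sigma_{\min}^2 .
\]
Combining this with Step 1 yields \eqref{eq:descent}.

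\textbf{Part (ii).} Let $\boldsymbol{\theta}^\ast \in B$ be a global minimizer and suppose $\sin\boldsymbol{\theta}^\ast \ne 0$. Then $\boldsymbol{\theta}^\ast$ lies in the second set defining $B$, so part (i) gives $L(\tilde{\boldsymbol{\theta}}^\ast) \le L(\boldsymbol{\theta}^\ast) - \tfrac12\sigma_{\min}^2\|\sin\boldsymbol{\theta}^\ast\|_2^2 < L(\boldsymbol{\theta}^\ast)$. This is strict because $\sigma_{\min}>0$ and $\sin\boldsymbol{\theta}^\ast\neq 0$, which contradicts global minimality. Hence $\sin\boldsymbol{\theta}^\ast = 0$.

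I expect no real obstacle. The only care needed is the bookkeeping in Step 3, namely using $m\le 1$ to absorb the extra factor of $m$, and checking that the division by $s$ is legitimate, which it is since $s>0$ by hypothesis.
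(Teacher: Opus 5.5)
Your proposal is correct and follows the paper's proof essentially step for step. It uses the same decomposition with $R(\tilde{\boldsymbol{\theta}})=0$ and $R(\boldsymbol{\theta})\ge\sigma_{\min}^2\|\sin\boldsymbol{\theta}\|_2^2$, the same combination of the two bounds in Lemma~\ref{lem:data-increment}, the same use of $\|\sin\boldsymbol{\theta}\|_\infty\le 1$ to linearize the quadratic term and match $\eta(\boldsymbol{\theta})$, and the identical contradiction argument for part (ii).
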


\begin{proof}
\textit{(i)} Since $\sin\tilde{\boldsymbol{\theta}} = 0$, we have $R(\tilde{\boldsymbol{\theta}}) = 0$, so
\[
L(\tilde{\boldsymbol{\theta}}) - L(\boldsymbol{\theta}) \;=\; \bigl(L_{\mathrm{data}}(\tilde{\boldsymbol{\theta}}) - L_{\mathrm{data}}(\boldsymbol{\theta})\bigr) - R(\boldsymbol{\theta}).
\]
Combining \eqref{eq:data-difference} with \eqref{eq:data-increment} gives
\begin{equation}
L_{\mathrm{data}}(\tilde{\boldsymbol{\theta}}) - L_{\mathrm{data}}(\boldsymbol{\theta})
\;\le\; \sigma_{\max}\,\|\sin\boldsymbol{\theta}\|_\infty\,\|\sin\boldsymbol{\theta}\|_2 \Bigl(2\|\mathbf{A}\cos\boldsymbol{\theta} - \mathbf{b}\|_2 + \sigma_{\max}\,\|\sin\boldsymbol{\theta}\|_\infty\,\|\sin\boldsymbol{\theta}\|_2\Bigr).
\label{eq:data-bound}
\end{equation}
Together with the regularizer lower bound $R(\boldsymbol{\theta}) \ge \sigma_{\min}^2 \|\sin\boldsymbol{\theta}\|_2^2$,
the inequality $L(\tilde{\boldsymbol{\theta}}) - L(\boldsymbol{\theta}) \le -\tfrac12 \sigma_{\min}^2 \|\sin\boldsymbol{\theta}\|_2^2$
is implied, after dividing through by $\|\sin\boldsymbol{\theta}\|_2^2 > 0$, by
\begin{equation}
\sigma_{\max}^2\,\|\sin\boldsymbol{\theta}\|_\infty^2
\;+\; 2\,\sigma_{\max}\,\|\sin\boldsymbol{\theta}\|_\infty\,\frac{\|\mathbf{A}\cos\boldsymbol{\theta} - \mathbf{b}\|_2}{\|\sin\boldsymbol{\theta}\|_2}
\;\le\; \tfrac{1}{2}\,\sigma_{\min}^2.
\label{eq:sufficient}
\end{equation}
Since $|\sin\theta_i| \le 1$ for every $i$, we have $\|\sin\boldsymbol{\theta}\|_\infty \le 1$, and therefore $\|\sin\boldsymbol{\theta}\|_\infty^2 \le \|\sin\boldsymbol{\theta}\|_\infty$. Substituting this into the first term of \eqref{eq:sufficient} shows that \eqref{eq:sufficient} is implied by the linear inequality
\[
\sigma_{\max}\!\left(\sigma_{\max} + 2\,\frac{\|\mathbf{A}\cos\boldsymbol{\theta} - \mathbf{b}\|_2}{\|\sin\boldsymbol{\theta}\|_2}\right)\!\|\sin\boldsymbol{\theta}\|_\infty
\;\le\; \tfrac{1}{2}\,\sigma_{\min}^2,
\]
which rearranges to $\|\sin\boldsymbol{\theta}\|_\infty \le \eta(\boldsymbol{\theta})$.

\textit{(ii)} Let $\boldsymbol{\theta}^* \in B$ be a global minimizer of $L$ on $\mathbb{R}^N$.
If $\sin\boldsymbol{\theta}^* = 0$, the conclusion holds.
Otherwise $\sin\boldsymbol{\theta}^* \ne 0$ and $\|\sin\boldsymbol{\theta}^*\|_\infty \le \eta(\boldsymbol{\theta}^*)$, so (i) applies and yields
\[
L(\tilde{\boldsymbol{\theta}}^*) \;\le\; L(\boldsymbol{\theta}^*) - \tfrac{1}{2}\sigma_{\min}^2\,\|\sin\boldsymbol{\theta}^*\|_2^2 \;<\; L(\boldsymbol{\theta}^*),
\]
contradicting global minimality of $\boldsymbol{\theta}^*$. Hence $\sin\boldsymbol{\theta}^* = 0$.
\end{proof}

\subsubsection{Role of the diagonal shift}
\label{sec:diagonal-shift}

Section~\ref{ising} introduces a tunable diagonal shift
$\mathbf{A}^{T}\mathbf{A} \mapsto \mathbf{A}^{T}\mathbf{A} + \beta \mathbf{I}$
(equivalently $\mathbf{J} \mapsto \mathbf{J} + \beta \mathbf{I}$).
Theorem~\ref{thm:local-binarization} explains the role of this shift precisely,
once the modification is mapped back onto the loss
$L(\boldsymbol{\theta}) = L_{\mathrm{data}}(\boldsymbol{\theta}) + R(\boldsymbol{\theta})$
analyzed above.

The cleanest reading of the shift within the $\cos\boldsymbol{\theta}$ relaxation
is that it augments the regularizer by an extra quadratic penalty on
$\sin\boldsymbol{\theta}$:
\[
R_\beta(\boldsymbol{\theta}) \;=\; \|\mathbf{A}\sin\boldsymbol{\theta}\|_2^2 + \beta\,\|\sin\boldsymbol{\theta}\|_2^2,
\qquad
L_\beta(\boldsymbol{\theta}) \;=\; L_{\mathrm{data}}(\boldsymbol{\theta}) + R_\beta(\boldsymbol{\theta}).
\]
For $\beta > -\sigma_{\min}^2$, the lower bound used in the proof of
Theorem~\ref{thm:local-binarization} sharpens to
\[
R_\beta(\boldsymbol{\theta}) \;\ge\; (\sigma_{\min}^2 + \beta)\,\|\sin\boldsymbol{\theta}\|_2^2,
\]
while the data-term increment estimate (Lemma~\ref{lem:data-increment}) is
unaffected, since it depends only on $\sigma_{\max}(\mathbf{A})$. Repeating the
argument of Theorem~\ref{thm:local-binarization} verbatim with this stronger
regularizer bound yields the modified threshold
\begin{equation}
\eta_\beta(\boldsymbol{\theta}) \;=\;
\frac{\sigma_{\min}^2 + \beta}{2\,\sigma_{\max}\!\left(\sigma_{\max} + 2\,\|\mathbf{A}\cos\boldsymbol{\theta} - \mathbf{b}\|_2 \big/ \|\sin\boldsymbol{\theta}\|_2\right)}
\label{eq:threshold-shifted}
\end{equation}
and the corresponding descent guarantee, valid for every $\boldsymbol{\theta}$
with $\sin\boldsymbol{\theta} \ne 0$ and
$\|\sin\boldsymbol{\theta}\|_\infty \le \eta_\beta(\boldsymbol{\theta})$,
\[
L_\beta(\tilde{\boldsymbol{\theta}}) \;\le\; L_\beta(\boldsymbol{\theta}) - \tfrac{1}{2}\,(\sigma_{\min}^2 + \beta)\,\|\sin\boldsymbol{\theta}\|_2^2.
\]
For $\beta = 0$ this recovers Theorem~\ref{thm:local-binarization}.

In this formulation, $\eta_\beta(\boldsymbol{\theta})$ is monotonically increasing in
$\beta$ on the admissible range $\beta > -\sigma_{\min}^2$ for every fixed
$\boldsymbol{\theta}$ with $\sin\boldsymbol{\theta} \ne 0$: the numerator depends linearly on
$\beta$ while the denominator is independent of $\beta$. Consequently the basin
$B_\beta = \{\boldsymbol{\theta} : \|\sin\boldsymbol{\theta}\|_\infty \le \eta_\beta(\boldsymbol{\theta})\}$
satisfies $B_\beta \supseteq B_0$ for all $\beta \ge 0$, with strict containment
whenever $\beta > 0$. As $\beta$ grows, $\eta_\beta(\boldsymbol{\theta})$ eventually
exceeds the geometric ceiling $\|\sin\boldsymbol{\theta}\|_\infty \le 1$, at which point
the basin saturates to $B_\beta = \mathbb{R}^N$ and the descent inequality
applies globally; the limit $\beta \to \infty$ therefore expresses the regime
in which the regularizer dominates and the binarization mechanism is no longer
restricted to a neighborhood of the binary set.

Two consequences are worth highlighting. First, the diagonal shift is not an
arbitrary regularization knob: it acts as a direct additive lever on the
spectral lower bound $\sigma_{\min}^2$ that controls the binarization
mechanism, and its effect on the basin can be read off from
\eqref{eq:threshold-shifted} without further assumptions on $\mathbf{A}$. Second,
the same calculation clarifies the role of negative $\beta$. Within the
admissible range $-\sigma_{\min}^2 < \beta < 0$ the basin shrinks and the
descent slope $\sigma_{\min}^2 + \beta$ degrades, so
Theorem~\ref{thm:local-binarization} no longer guarantees binarization on an
enlarged set. Negative $\beta$ may nevertheless be useful in practice for
reasons outside the scope of Theorem~\ref{thm:local-binarization}, for
example, destabilizing incorrect spin configurations when the optimizer
stagnates, as observed empirically in Section~\ref{ising}.

\subsection{Ising Model Interpretation}
\label{ising}

To formulate the problem as a physical Ising model, we aim to encourage the continuous ``soft'' variables to converge precisely to the desired discrete states of $\{\pm1\}$, thereby improving the accuracy of the overall optimization process. 

Let $\mathbf{x} \in \{0,1\}^N$ denote a binary state vector. We define the interaction matrix $\mathbf{J}=\mathbf{A}^T \mathbf{A}$ and the external magnetic field $\mathbf{h}=\mathbf{A}^T\mathbf{b}$. The Hamiltonian for this system is given by:
\begin{equation}
    \boldsymbol{\mathcal{H}}  = \mathbf{x}^T \mathbf{J} \mathbf{x} - 2\mathrm{Real}\{ \mathbf{h}^T \mathbf{x} \} \ .
\end{equation}

The implicit regularization $\mathcal{R}$ derived in the previous subsection, can be interpreted mathematically as adding an identity matrix to the interaction matrix (effectively a diagonal shift with a coefficient of 1). We generalize this idea by introducing a tunable hyperparameter $\beta$ that scales this diagonal shift relative to the rest of the matrix. The regularized Hamiltonian becomes:
\begin{equation}
    \boldsymbol{\mathcal{H}} = \mathbf{x}^T \left( \mathbf{J} + \beta\mathbf{I}\right) \mathbf{x} - 2\mathrm{Real}\{ \mathbf{h}^T \mathbf{x} \} \ ,
\end{equation}
where $\beta$ can be dynamically tuned during the optimization process. Note that for $\beta=1$ the result coincides with the complex case.

To transition to a standard physical model, we map the binary variables to bipolar physical spins, defined as $\mathbf{s} = 2\mathbf{x} - 1 \in \{\pm 1\}^N$. Substituting $\mathbf{x} = \frac{1}{2}(\mathbf{s} + 1)$ into the equation yields the final Ising Hamiltonian:
\begin{equation}
\label{eq:ising_hamiltonian0}
    \boldsymbol{\mathcal{H}} = \frac{1}{4}\left({\mathbf{s}+1}\right)^T \left( \mathbf{J} + \beta\mathbf{I}\right) \left({\mathbf{s}+1}\right) - \mathrm{Real}\{ \mathbf{h}^T \left({\mathbf{s}+1}\right) \} \ .
\end{equation}

In the context of optimizing the model, the parameter $\beta$ provides explicit control over the magnitude of the spins. In general, a more positive $\beta$ introduces an additional penalty for spins with higher magnitude, and therefore forces the soft spins to converge closer to zero. However, a more negative $\beta$ forces the soft spins to converge more tightly to $\pm1$ by the end of the optimization. Interestingly, we empirically observe that in certain scenarios, a correctly selected value of $\beta$ helps avoid local minima by destabilizing incorrect spin assignments when the optimization process stagnates.

To determine the optimal regularizer coefficient $\beta$ and evaluate model robustness, we conducted a sensitivity analysis across a symmetric range of real values around zero. While the exact optimal $\beta$ is inherently problem - dependent and distinctly non-zero in most cases, the optimization framework exhibits a broad range of stability.

\subsection{Optimization Algorithm Description}
\label{sub:Algorithm_Description_QUBO}

Across all evaluated optimization problems, the physical spins serve as the optimized parameters, which are continuously updated via gradient descent to minimize the defined energy function. The specific parameterization of these continuous variables depends on the chosen algebraic model: the real and the complex representation are parameterized by a single phase angle, the spherical model by two angles, and the quaternion model by three independent angles. Before the optimization all angle parameters of the model are initialized to random values in the range $[0,2\pi]$.

To ensure robust statistical evaluation, we define two parameters within our framework: \textit{experiments} and \textit{trials}. An \textit{experiment} ($N_{\text{exp}}$) represents a complete optimization task over a newly generated random problem realization, allowing for cross-instance averaging and statistical analysis. Within each experiment, we execute multiple parallel \textit{trials} ($N_{\text{IC}}$), where each trial represents an optimization run starting from a distinct, randomly generated initial condition. The final solution for a given experiment is selected from the trial that achieves the minimum energy.

An exception to this structure is the planted-solution Ising problem. Because these instances are deterministically engineered rather than randomly sampled, only a single experiment ($N_{\text{exp}} = 1$) is given per problem size, and the algorithmic success is evaluated directly against the explicitly known ground-state energy.

The general top-level pseudo-code for this framework is given in Algorithm \ref{alg:training_unified}.

\begin{algorithm}[H]
\caption{General Solver Model Optimization}
\label{alg:training_unified}
\begin{algorithmic}[1] 
\Require
    \Statex $N$ \Comment{Problem size}
    \Statex $N_{\text{exp}}$ \Comment{Number of \textit{experiments}}
    \Statex $N_{\text{IC}}$ \Comment{Number of \textit{trials} (initial conditions)}
\Ensure Average Number of Errors ($\bar{E}_{\text{err}}$)

\For{$n = 1$ \textbf{to} $N_{\text{exp}}$} \Comment{Loop over \textit{experiments}}
    \State Generate random problem realization: $\mathbf{A}, \mathbf{m}, \mathbf{e}$
    \State Calculate measurement vector: $\mathbf{b} = \mathbf{A}\mathbf{m} + \mathbf{e}$
    \State Initialize task-specific hyperparameters (e.g., regularization factor $\beta$)
    
    \For{$k = 1$ \textbf{to} $N_{\text{IC}}$} \Comment{Loop over \textit{trials}}
        \State Initialize optimized parameters $\mathbf{s}$ randomly
        \State Gradient-based minimization of the continuous objective $\mathcal{L}_{\text{optimize}}$
        \State Discretize parameters to valid domain: $\bar{\mathbf{m}} = \text{sgn}(\mathbf{s})$
        \State Evaluate final discrete cost $\mathcal{L}[k] \gets ||\mathbf{A}\bar{\mathbf{m}}-\mathbf{b}||_2$
        \State Calculate number of errors $err[k] = \sum_{i=1}^{N}(\bar{m}_i \neq m_i)$
        \State Update dynamic hyperparameters (if applicable, e.g. regularizers weights)
    \EndFor
    
    \State $k^* \gets \argmin_{k} \mathcal{L}[k]$ \Comment{Select best valid \textit{Trials}}
    \State $\text{res}[n] \gets err[k^*]$ \Comment{Save experiment result}
\EndFor

\State \Return $\bar{E}_{\text{err}} \gets \frac{1}{N_{\text{exp}}} \sum_{n=1}^{N_{\text{exp}}} \text{res}[n]$
\end{algorithmic}
\end{algorithm}

While the core pipeline remains consistent, specific adaptations are required depending on the problem class. For instance, in the sparse coding problem, we construct a cost function that incorporates the cardinality constraint via a dynamic regularization multiplier. Practically, the penalty factor $\lambda$ is initialized to an empirical baseline (e.g., $0.035$) and is iteratively adjusted at the end of each trial: $\lambda \gets \lambda + \alpha \cdot (C[k] - C_{\text{target}})$, where $\alpha = 10^{-3}$ is the update weight and $C[k]$ is the measured cardinality of the discretized vector. This operation is meant to be done in step 11 of Algorithm \ref{alg:training_unified}. 

The overall computational complexity of the algorithm per experiment scales as $\mathcal{O}(N_{\text{IC}} \cdot T)$, where $N_{\text{IC}}$ denotes the number of distinct initial conditions and $T$ represents the number of time steps (gradient updates) required for convergence in the optimization procedure.

\section{Simulation Results}
\label{sec:TestRes}

In this section, we present the simulation results across all evaluated optimization problems. For both the QUBO and sparse coding tasks, our evaluation framework consisted of $N_{\text{exp}} = 50$ experiments, with $N_{\text{IC}} = 20$ trials. The continuous parameters were updated using the Adam optimizer, as it showed fast convergence. For the sparse coding task specifically, we utilized the LightSolver codebase \cite{romano2024quantum} that benchmarked our approach against standard recovery algorithms, namely Orthogonal Matching Pursuit (OMP) \cite{pati1993orthogonal} and the Least Absolute Shrinkage and Selection Operator (LASSO) \cite{tibshirani1996regression}. 

For the sparse coding task, to establish rigorous baselines for sufficiently small-scale instances, we performed an exhaustive combinatorial search to determine the exact optimal solution. Across all tested problem instances, the total optimization runtime varied from several hours up to two days on a standard personal computer, we used an Apple M1 Pro (32 GB RAM).

\subsection{QUBO}
\label{sub:quboResults}
For the QUBO evaluation (subsection \ref{sub:formQUBO}), we examined a system of size $M=N=160$, utilizing target ground-truth binary vectors with a fixed cardinality of $C=80$, evaluated across varying Gaussian noise levels $\sigma$. 

For each algebraic parameterization model at every noise level, we replace the constant diagonal hyperparameter $\beta$ with a two-stage parameters, defined by $\mathbf{k}_{\mathrm{reg}} = (k_0, k_1)$. These hyperparameters govern the regularization strength during the first and second halves of the 2000-epoch optimization process, respectively, with the transition occurring at epoch 1000. The specific values for $k_0$ and $k_1$ were determined via grid search. We introduced this modification because empirical observations demonstrate that this dynamical changing of the strength of the regularizer helps the algorithm escape local minima.

Figure \ref{fig:qubo_sweep} summarizes the performance comparisons. A clear trend emerges from these results: in the absence of explicit regularization, the standard real-valued relaxation performs significantly worse than the multidimensional parameterizations. However, upon introducing the optimized regularization schedule, the performance gap substantially narrows, rendering all models roughly equivalent in accuracy (within statistical fluctuations).

\begin{figure}[H]
\centering
\includegraphics[width=1.0\textwidth]{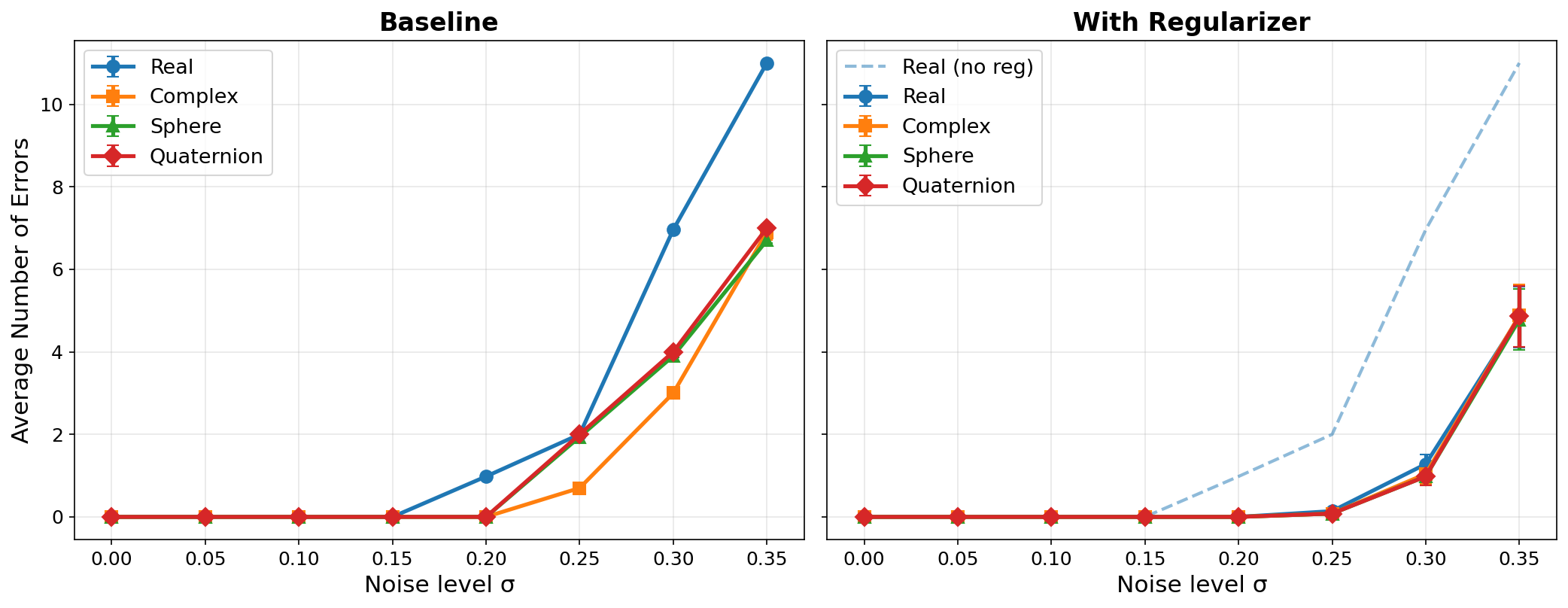}
\caption{Average number of bit errors vs.\ noise level $\sigma$ for the QUBO $160 \times 160$ problem (cardinality $80$, $10^4$ epochs, $50$ experiments $\times$ $20$ trials). Left\textbf{:} without regularization. Right\textbf{:} with optimized two-stage diagonal regularization ($\mathbf{k}_{\mathrm{reg}} = (0,1.0)$). Error bars indicate the standard error of the mean.}
\label{fig:qubo_sweep}
\end{figure}

\subsection{Sparse Coding}
For the sparse coding evaluation (subsection \ref{sub:formSC}), we investigated two distinct problem scales: a small-scale, solvable instance with dimensions $M=8, N=16$, and a target cardinality of $C=6$, and a larger, more complex instance with $M=80, N=160$, and $C=30$. 
The comparative performance of the evaluated models across varying noise levels is depicted in Figure \ref{fig:SparseCodingRes}.

\begin{figure}[H]
    \centering
    \includegraphics[width=1.0\linewidth]{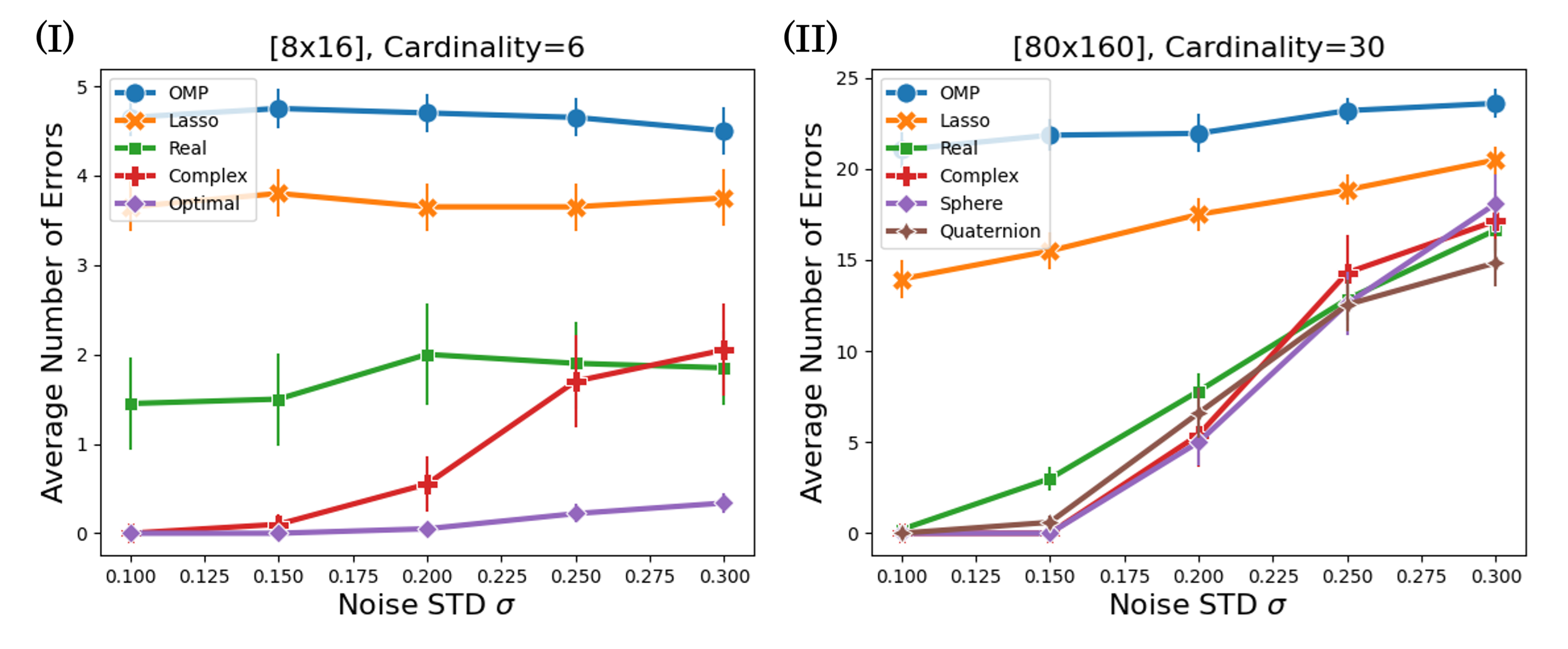}
    \caption{Summary of sparse coding recovery results. Panels (I) and (II) present a comparison of the different optimization methods under the specific conditions indicated in their titles. Each data point on the graph represents the average number of bit errors for a specific model at a given noise power. Averages were computed over 20 distinct random problem realizations, utilizing 20 trials with random initial conditions for each realization.}
    \label{fig:SparseCodingRes}
\end{figure}

\subsection{Ising Problems with Planted Solutions}
\label{ising1}
In this final evaluation phase, we benchmarked the models against 11 specific planted-solution Ising problem instances (subsection \ref{sub:formPS}), with system sizes ranging from $N=28$ to $N=1188$. Unlike the dense matrices in the previous tasks, the interaction matrices $\mathbf{J}$ (defined in Eq. \ref{eq:ising_hamiltonian0}) utilized for these instances are inherently sparse, and asymmetric, and possess zero diagonal elements ($\mathbf{J}_{ii} = 0$). Physically, it means that the spin-spin interactions are non-reciprocal. However, since the quadratic term only use symmetry part, we practically used $\mathbf{J}\rightarrow\tfrac{1}{2}(\mathbf{J}+\mathbf{J}^T)$.

Because each planted solution instance represents a single, deterministically engineered problem with a known, exact ground state, cross-instance averaging is not applicable. We maximized the number of independent search trials ($N_{\text{IC}}$) within a practical computational time constraint of approximately a few hours to a single day per problem. The exact number of trials executed for each problem is detailed in the tables below. For these instances, we focused exclusively on the real parameterization, as higher-order encodings showed no significant advantage over the regularized real-valued model.

Table \ref{tab:resPS} compares the success rates of the baseline model (standard real parametrization) against our proposed model (real parametrization with regularization). Our approach successfully resolved 8 of the 11 problems, demonstrating a clear improvement over the baseline, which solved only 2. Alongside the raw success counts, the table reports the average number of trials required to reach a solution. While performance generally decreases as system size grows, we observe occasional fluctuations that we attribute to the effects of numerical fine-tuning. Further analysis of the convergence dynamics for these planted solutions is detailed in \ref{app:histPlanted}. An example for a planted solution problem is shown in \ref{app:additional_results}.

\begin{table}
    \centering
    \begin{tabular}{|c|c|c|c|c|c|}\hline
         \textbf{\#} &  \textbf{Size}  &\textbf{GS Energy}&  \textbf{Trials ($10^3$)} &  \textbf{Regularized} & \textbf{Baseline} \\\hline
         1&  28  &-70&  1 &  \textcolor{ForestGreen}{ 87 (12)}& {\color{red}9 (111)}\\\hline
         2&  40  &-101&  1 &  \textcolor{ForestGreen}{ 46 (22)}& {\color{red}4 (250)} \\\hline
         3&  66  &-157&  6 &  {\color{ForestGreen}971 (7)}& 0\\\hline
         4&  66  &-164&  6 &  \textcolor{ForestGreen} {14 (429)}& 0\\\hline
         5&  82  &-194&  6 &  \textcolor{ForestGreen} {6 (1000)}& 0\\\hline
         6&  93  &-223&  6 &  {\color{ForestGreen} 103 (59)} & 0\\\hline
         7&  153  &-364&  6 &  {\color{ForestGreen}212 (29)} & 0\\\hline
         8&  336  &-808&  6 &  {\color{ForestGreen}1 (6000)} & 0\\\hline
    \end{tabular}
    \caption{Success rates for finding the planted solution over the different problems, along with the problems size and ground state energies. The values in parentheses indicate the average number of trials to success.}
    \label{tab:resPS}
\end{table}

\section{Discussion and Conclusion}
\label{sec:discussion}

The primary contribution of this work is the identification of a regularization mechanism revealed by complex phase embedding, which can be applied more generally to continuous relaxations of discrete optimization problems. By adopting a physical-inspired approach, we demonstrated the efficacy of this complex relaxation framework for solving NP-hard combinatorial optimization problems. While naively relaxing discrete spin variables into a continuous domain often causes algorithms to stall in suboptimal local minima, our analysis reveals that the initial choice of mathematical parameterization is paramount. Specifically, adopting a physical, complex-wave interpretation for the discrete bits exposed a natural, implicit regularization mechanism. When this physics-inspired regularizer is explicitly incorporated, it drastically improves the performance of standard, real-valued optimization.

Furthermore, our empirical evaluations highlight a distinct plateau in representational advantage. Crucially, our analytical derivations (subsection \ref{sub:complexRelaxation}) and empirical evaluations reveal that the implicit binarization penalty is fully realized within a two-dimensional complex space; extending the parameterization into higher-dimensional structures, such as spherical or quaternion algebras, introduces computational overhead without yielding any additional regularizing benefits over the complex-valued approach.

Additionally, when benchmarked in heavily constrained, under-determined scenarios like sparse coding, our regularized framework exhibited highly competitive performance against traditional algorithms (e.g., LASSO, OMP) and contemporary dedicated platforms like LightSolver, reliably identifying the exact optimal sparse supports.

We posit that the success of this physics-inspired approach stems directly from the mathematical properties of the unit-modulus complex representation. By mapping the continuous variables onto the complex plane, the optimization objective naturally decouples into a standard data fidelity term and a distinct residual term. Minimizing this residual intrinsically acts as a penalty that forces the continuous variables toward the discrete binary boundaries. By extracting this topological insight and adapting it as a tunable regularizer for real-valued spaces, we grant gradient-based optimizers the ability to smoothly navigate non-convex energy landscapes and successfully bypass part of the severe local optima typical of unregularized continuous relaxations.

Despite the successful performance, our approach has several notable limitations. First, the effectiveness of the binarization mechanism depends on the spectral characteristics of the Ising interaction matrix; when the matrix exhibits strong anisotropy or correlations, the binarization penalty no more hold (as discussed in Appendix \ref{app:complexReg}). In addition, identifying an appropriate value for the regularization coefficient $\beta$ is sometimes challenging, and the current framework has difficulty preserving exact convergence when extended to very large-scale problem instances.

From a machine learning perspective, this work motivates a rigorous, systematic comparison between standard real-valued architectures and Complex-Valued Neural Networks (CVNNs) across a broader class of combinatorial optimization problems to see if similar implicit regularizations naturally emerge. Finally, the mathematical affinity between our complex unit-circle parameterization and the phase of quantum amplitudes suggests a compelling link between our classical unit-circle parameterization and quantum phase amplitudes, which warrants further investigation. Investigating how these classical physics-inspired solvers might inform, or interface with, near-term quantum computing algorithms remains a compelling avenue for our future research.

\section*{Acknowledgments}
The authors thank Itay Hen for sharing the planted solution problems that were investigated during the course of this project. K.C. gratefully acknowledges the Milner Foundation, and the VATAT (PBC) Fellowship for Outstanding PhD Students in Data Science. This work was supported by the Israeli Ministry of Innovation, Science and Technology (Grant No. 1001572598); the Tel Aviv University Center for Artificial Intelligence; the Israel Science Foundation (ISF) (Excellence Center Grant No. 2312/21 and Grant No. 969/22). Y.O. is supported by the US-Israel Binational Science Foundation (BSF), the Israel Ministry of science, the Tel Aviv University Center for Artificial Intelligence, and the PAZY Foundation.



\section*{Code and Data Availability}
All underlying data, configurations, and custom Python scripts required to fully reproduce the empirical results of this study are publicly available at \href{https://github.com/khencohen/ComplexIsing}{GitHub Repo}.

\bibliographystyle{elsarticle-num} 
\bibliography{cas-refs}

\newpage

\appendix

\section{Approximate Isometry and Binarization}
\label{app:complexReg}

In this appendix we develop a random-matrix complement to
Theorem~\ref{thm:local-binarization} that explains why its spectral
hypothesis $\sigma_{\min}(\mathbf{A}) > 0$ is generic for typical measurement
ensembles, and provides intuition for why the residual term
$\mathcal{R} = \|\mathbf{A}\sin(\boldsymbol{\theta})\|_2^2$ acts as a
binarization-promoting penalty. The deterministic, quantified version of the
binarization claim is the content of Theorem~\ref{thm:local-binarization};
the calculation here is intended as motivation, not as a substitute for it.

To avoid clashing with the noise standard deviation $\sigma$ of
Section~\ref{sec:problemsformulation} and the singular values $\sigma_{\min}$,
$\sigma_{\max}$ of Section~\ref{sec:binarization}, we denote the
per-entry variance of the random measurement matrix by $\nu^2$.

Let $\mathbf{A} \in \mathbb{R}^{M \times N}$ be drawn from a sub-Gaussian
ensemble (in the sense of \cite{vershynin2018high}) with i.i.d. entries
$A_{ij}$ of zero mean and variance $\nu^2$. For any fixed vector
$\mathbf{v} \in \mathbb{R}^N$,
\begin{equation}
\mathbb{E}\!\left[\|\mathbf{A}\mathbf{v}\|_2^2\right]
\;=\; \mathbb{E}[\mathbf{v}^T \mathbf{A}^T \mathbf{A}\,\mathbf{v}]
\;=\; \mathbf{v}^T\,\mathbb{E}[\mathbf{A}^T\mathbf{A}]\,\mathbf{v}.
\end{equation}
The Gram matrix $\mathbf{G} = \mathbf{A}^T\mathbf{A}$ has diagonal entries with expected
value $M\nu^2$ and off-diagonal entries with vanishing expectation (by
independence and the zero-mean assumption), so
$\mathbb{E}[\mathbf{A}^T\mathbf{A}] = \alpha\,\mathbf{I}_N$ with $\alpha \equiv M\nu^2$, and the
operation acts as a scaled isometry in expectation:
\begin{equation}
\mathbb{E}\!\left[\|\mathbf{A}\mathbf{v}\|_2^2\right]
\;=\; \alpha\,\|\mathbf{v}\|_2^2.
\end{equation}

For sub-Gaussian ensembles, the Hanson-Wright inequality (Theorem~6.2.1
in~\cite{vershynin2018high}) gives a quantitative two-sided concentration
bound: there is an absolute constant $c > 0$ such that, for every fixed
$\mathbf{v} \in \mathbb{R}^N$ and every $\varepsilon \in (0,1)$,
\begin{equation}
\label{eq:isometry-fixed-v}
\mathbb{P}\!\left(
\bigl|\,\|\mathbf{A}\mathbf{v}\|_2^2 - \alpha\,\|\mathbf{v}\|_2^2\,\bigr|
\;\ge\; \varepsilon\,\alpha\,\|\mathbf{v}\|_2^2
\right)
\;\le\; 2\,\exp\!\bigl(-c\,M\,\varepsilon^2\bigr).
\end{equation}
A standard $\varepsilon$-net argument (e.g., Theorem~9.1.1 in
\cite{vershynin2018high}) extends \eqref{eq:isometry-fixed-v} from a fixed
$\mathbf{v}$ to a uniform statement: provided $M \gtrsim N$ up to logarithmic
factors, with high probability
\begin{equation}
(1-\varepsilon)\,\alpha\,\|\mathbf{v}\|_2^2
\;\le\; \|\mathbf{A}\mathbf{v}\|_2^2
\;\le\; (1+\varepsilon)\,\alpha\,\|\mathbf{v}\|_2^2
\qquad\text{for all } \mathbf{v} \in \mathbb{R}^N.
\label{eq:isometry-uniform}
\end{equation}
This uniformity is what is needed to apply the estimate to the
\emph{trajectory} $\sin(\boldsymbol{\theta}(t))$ generated by the optimizer.
Specializing to $\mathbf{v} = \sin(\boldsymbol{\theta})$ gives, for every iterate
along the trajectory,
\begin{equation}
\|\mathbf{A}\sin(\boldsymbol{\theta})\|_2^2 \;\approx\; \alpha\,\|\sin(\boldsymbol{\theta})\|_2^2,
\label{eq:R-approx}
\end{equation}
with relative error at most $\varepsilon$.

Combined with the deterministic lower bound
$R(\boldsymbol{\theta}) \ge \sigma_{\min}^2\,\|\sin(\boldsymbol{\theta})\|_2^2$
used in Theorem~\ref{thm:local-binarization}, the estimate
\eqref{eq:R-approx} pins down a typical value
$\sigma_{\min}^2 \gtrsim (1-\varepsilon)\,\alpha = (1-\varepsilon)\,M\nu^2$
for sub-Gaussian ensembles with $M \gtrsim N$, ensuring that the basin
threshold $\eta(\boldsymbol{\theta})$ in Theorem~\ref{thm:local-binarization}
is bounded away from zero for typical instances. The actual binarization
statement, that any iterate sufficiently close to a binary configuration
is improved by projecting onto it, is the content of
Theorem~\ref{thm:local-binarization} itself, not of the heuristic identity
\eqref{eq:R-approx}.

\paragraph{Underdetermined regime}
The argument above relies critically on $M \gtrsim N$. In the sparse-coding
setting, where $M \ll N$, the matrix
$\mathbf{A}^T\mathbf{A}$ has a $(N-M)$-dimensional null space, $\sigma_{\min}(\mathbf{A}) = 0$,
and Theorem~\ref{thm:local-binarization} is vacuous: the regularizer
$\mathcal{R}$ penalizes $\sin$-components in the row space of $\mathbf{A}$ but is
blind to the null space. In this regime, the explicit cardinality
constraint enforced through the dynamic Lagrange multiplier restricts the feasible set sufficiently for the
relaxed solver to recover binary supports. The empirical performance reported therefore reflects the combined effect of
the regularizer and the cardinality constraint, rather than the
isometry-based mechanism considered in this appendix on its own.

More generally, if $\mathbf{A}$ exhibits strong column correlations or highly
anisotropic singular values without being explicitly low-rank, the
conclusion of \eqref{eq:R-approx} weakens: directions associated with small
singular values of $\mathbf{A}$ are penalized only weakly, and binarization is not
enforced uniformly across coordinates. This does not invalidate the
underlying mechanism, but it makes its strength track the conditioning and
spectral structure of the measurement matrix.

\subsection{Regularization Term in Higher-Dimensional Parametrizations}
\label{app:hyperspherical}

We now show that replacing the unit-circle parametrization
$\mathbf{x} = \cos(\boldsymbol{\theta})$ by a unit-sphere parametrization on
$S^{d-1}$ with $d \ge 2$ does not yield any additional regularization beyond
the $d = 2$ case.

Parametrize each spin as a unit vector
$\mathbf{u}_n = (u_n^{(1)}, u_n^{(2)}, \ldots, u_n^{(d)}) \in S^{d-1}$,
with the first component $u_n^{(1)}$ playing the role of the binary
coordinate (eventually rounded to $\pm 1$) and the remaining $d-1$
components representing the relaxation degrees of freedom. Let
$\mathbf{u}^{(i)} \in \mathbb{R}^N$ collect the $i$-th components across
spins, $(\mathbf{u}^{(i)})_n = u_n^{(i)}$. The data fidelity uses only
$\mathbf{u}^{(1)}$, and the regularizer extends naturally as
\[
\mathcal{R}_{d\ge 2}(\mathbf{u}) \;=\; \sum_{i=2}^{d} \|\mathbf{A}\,\mathbf{u}^{(i)}\|_2^2.
\]
Applying the isometry estimate \eqref{eq:R-approx} componentwise,
\begin{equation}
\mathcal{R}_{d\ge 2}(\mathbf{u})
\;\approx\; \alpha\,\sum_{i=2}^{d}\|\mathbf{u}^{(i)}\|_2^2
\;=\; \alpha\,\sum_{n=1}^{N}\sum_{i=2}^{d}\bigl(u_n^{(i)}\bigr)^2.
\end{equation}
The unit-norm constraint $\sum_{i=1}^{d}(u_n^{(i)})^2 = 1$ for each $n$ gives
\begin{equation}
\mathcal{R}_{d\ge 2}(\mathbf{u})
\;\approx\; \alpha\,\sum_{n=1}^{N}\Bigl(1 - (u_n^{(1)})^2\Bigr).
\label{eq:R-d-cartesian}
\end{equation}
The right-hand side depends only on the binary coordinate $u_n^{(1)}$ and is
independent of $d$. For $d=2$ with the parametrization
$(u_n^{(1)}, u_n^{(2)}) = (\cos\theta_n, \sin\theta_n)$, the expression
\eqref{eq:R-d-cartesian} reduces to
$\alpha\,\|\sin\boldsymbol{\theta}\|_2^2$. For $d > 2$, the same expression is
obtained, so increasing the dimension provides no additional regularization
benefit beyond what the two-dimensional complex parametrization already
realizes.

\if{
\section{Approximate Isometry and Binarization}
\label{app:complexReg}
In this appendix, we demonstrate that the secondary residual term $\mathcal{R} = \|\mathbf{A}\sin(\boldsymbol{\theta})\|_2^2$ acts as an implicit binarization regularizer. 

We approach this by analyzing the expected behavior of a random matrix ensemble. While a single instantiation of the measurement matrix $\mathbf{A}$ operates deterministically, large random matrices exhibit strong self-averaging properties due to the concentration of measure in high-dimensional spaces. Specifically, let us assume $\mathbf{A} \in \mathbb{R}^{M \times N}$ is drawn from a sub-Gaussian ensemble (for the precise definition of sub-Gaussian ensemble we refer to \cite{vershynin2018high}) whose entries $A_{ij}$ are independent and identically distributed (i.i.d.) with zero mean and variance $\sigma_A^2$. For any fixed vector $\mathbf{v} \in \mathbb{R}^N$, the expected squared Euclidean norm of the mapped vector $\mathbf{A}\mathbf{v}$ over the matrix ensemble is given by:
\begin{equation}
    \mathbb{E}[\|\mathbf{A} \mathbf{v}\|_2^2] = \mathbb{E}[\mathbf{v}^T \mathbf{A}^T \mathbf{A} \mathbf{v}] = \mathbf{v}^T \mathbb{E}[\mathbf{A}^T \mathbf{A}] \mathbf{v} \ .
\end{equation}

By evaluating the $N \times N$ Gram matrix $\mathbf{G} = \mathbf{A}^T \mathbf{A}$, we find that the diagonal elements have an expected value of $M\sigma_A^2$, while the off-diagonal elements vanish due to the statistical independence and zero mean of the entries. Consequently, $\mathbb{E}[\mathbf{A}^T \mathbf{A}] = \alpha \mathbf{I}_N$, where we define the scaling factor $\alpha \equiv M\sigma_A^2$. Thus, the operation acts as a scaled isometry in expectation:
\begin{equation}
    \mathbb{E}[\|\mathbf{A} \mathbf{v}\|_2^2]  = \alpha \|\mathbf{v}\|_2^2 \ .
\end{equation}

For sub-Gaussian matrices, standard concentration inequalities guarantee that the variance of $\|\mathbf{A}\mathbf{v}\|_2^2$ is heavily suppressed \cite{vershynin2018high}. Therefore, in the high-dimensional limit, this proportionality is preserved with high probability for a typical single realization:
\begin{equation}
    \|\mathbf{A}\mathbf{v}\|_2^2 \approx \alpha\|\mathbf{v}\|_2^2 \ .
\end{equation}

Applying this result to the imaginary component of our complex relaxation, where $\mathbf{v} = \sin(\boldsymbol{\theta})$, yields:
\begin{equation}
    \|\mathbf{A}\sin(\boldsymbol{\theta})\|_2^2 \approx \alpha \|\sin(\boldsymbol{\theta})\|_2^2 \ .
\end{equation}

Because the global optimization framework actively minimizes this residual term as part of the total cost function, the optimizer naturally drives the norm $\|\sin(\boldsymbol{\theta})\|_2^2 \to 0$. This forces the continuous phase variables to converge to $\theta_j = \pi k$ for $k \in \mathbb{Z}$, which inherently projects the relaxed variables back to the discrete binary field: $\cos(\theta_j) = \pm 1$.

However, it is important to acknowledge the limitations of this mechanism when the assumptions of the random matrix ensemble are relaxed. If the matrix $\mathbf{A}$ exhibits strong column correlations, highly anisotropic singular values, or a low-rank structure, the scaled isometry property breaks down. In such cases, certain non-binary directions corresponding to small singular values may be only weakly penalized, or could even lie close to the null space of $\mathbf{A}$. Consequently, the regularizer does not enforce binarization equally across all coordinates. While this does not invalidate the underlying mechanism, it highlights that its strength and effectiveness are fundamentally dependent on the spectral and structural properties of the measurement matrix.

\subsection{Regularization Term in Hyperspherical Coordinates}

Assuming this case hold for any higher dimension, and denote the angles as $\{ \theta^{(i)} \}_{i=1}^d$. Therefore, one can notice that for the general case:
\begin{gather}
    \mathcal{R}_{\text{d}\geq 2}(\boldsymbol{\theta}) = \alpha \sum_{i=2}^d \|\sin(\boldsymbol{\theta}^{(i)})\|_2^2 = \alpha \sum_{i=2}^d \sum_{n=1}^N \sin^2(\theta_n^{(i)}) \\
    = \alpha \sum_{n=1}^N \sum_{i=2}^d \sin^2(\theta_n^{(i)}) =
    \alpha \sum_{n=1}^N \left( 1-\sin^2(\theta_n^{(1)}) \right) ,
\end{gather}
while in the second equation we used the definition of the l2 norm, in the third we exchanged the summations, and in the third step we used trigonometric identity using the fact that the vector is normalized.

Consequently, we find that $\mathcal{R}_{d\geq2}(\boldsymbol{\theta})$ depends solely on $\theta_1$. This implies that increasing the dimension $d$ provides no advantage for this regularization objective compared to the two-dimensional case.

}\fi
\section{Lasso Regularization}
\label{app:LassoReg}

To fully replicate the objective outlined in Eq. \ref{eq:quboLoss}, we must explicitly reintroduce the sparsity constraint using a continuous penalty.  Substituting the complex relaxation (please see $l1$ derivation) fundamentally alters the nature of this sparsity regularizer. The ultimate comparison between the real and complex optimization landscapes is given by:
\begin{align} \label{eq:finalLossComparison}
    \mathcal{L}_{\text{Real}} &= \min_{\boldsymbol{\theta} \in \mathbb{R}^N} \left( \underbrace{\|\mathbf{A}\cos(\boldsymbol{\theta}) - \mathbf{b} \|_2^2}_{\mathcal{L}_{\text{data}}} + \lambda \underbrace{\| \cos(\boldsymbol{\theta}/2)\|_2^2}_{\mathcal{L}_{\text{reg}}} \right) \ , \\
    \mathcal{L}_{\text{Complex}} &= \min_{\boldsymbol{\theta} \in \mathbb{R}^N} \left( \underbrace{\|\mathbf{A}\cos(\boldsymbol{\theta}) - \mathbf{b} \|_2^2}_{\mathcal{L}_{\text{data}}} + \underbrace{\|\mathbf{A}\sin(\boldsymbol{\theta})\|_2^2}_{\mathcal{R}} + \lambda \underbrace{ \| \cos({\boldsymbol{\theta}/2})\|_1 }_{\mathcal{L}_{\text{Lasso}}} \right) \ ,
\end{align}
where $\lambda$ acts as the hyperparameter controlling the target cardinality. Notably, the complex formulation naturally induces an $\ell_1$-type Lasso penalty on the relaxation angles, yielding fundamentally different and more stable convergence properties than the $\ell_2$ penalty present in the real-valued baseline.

\subsection{$l1$ Derivation}
We start with equation \ref{eq:complexLoss} as presented in the paper. Then, adding \textit{l1} regularization term, the \textbf{real} optimization is defined as:
\begin{gather}
    \min_{\boldsymbol{\theta} \in \mathbb{R}^N} \|\mathbf{A}\cos(\boldsymbol{\theta}) - \mathbf{b} \|_2^2 + \frac{\lambda}{2} \|\cos(\boldsymbol{\theta}) + 1 \|_1  = \\
    \min_{\boldsymbol{\theta} \in \mathbb{R}^N} \underbrace{\|\mathbf{A}\cos(\boldsymbol{\theta}) - \mathbf{b} \|_2^2}_{\mathcal{L}_{\text{real}}} + \lambda \underbrace{\| \cos(\boldsymbol{\theta}/2)\|_2^2}_{\mathcal{L}_{\text{reg}}}
\end{gather}
While we used the fact that: 
\begin{gather}
    \|\cos(\boldsymbol{\theta}) + 1 \|_1=\sum_n |{ \cos(\theta_n) + 1}|=2\sum_n \cos^2(\theta_n/2)= 2\| \cos(\boldsymbol{\theta}/2)\|_2^2
\end{gather}

And for the \textbf{complex} optimization we get:
\begin{align} \label{eq:complexLoss}
   & \min_{\boldsymbol{\theta} \in \mathbb{R}^N} \|\mathbf{A} e^{i\boldsymbol{\theta}} - \mathbf{b} \|_2^2 + \frac{\lambda}{2} \|e^{i\boldsymbol{\theta}} + 1 \|_1 = \nonumber \\
    & \min_{\boldsymbol{\theta} \in \mathbb{R}^N} \underbrace{\|\mathbf{A}\cos(\boldsymbol{\theta}) - \mathbf{b} \|_2^2}_{\mathcal{L}_{\text{real}}} + \underbrace{\|\mathbf{A}\sin(\boldsymbol{\theta})\|_2^2}_{\mathcal{R}} +  \lambda \underbrace{ \| \cos({\boldsymbol{\theta}/2})\|_1 }_{\mathcal{L}_{\text{Lasso}}} \ .
\end{align}

While we used the fact that:
\begin{gather}
    \|e^{i\boldsymbol{\theta}} + 1 \|_1= \sum_n |e^{i{\theta_n}} + 1|= \\ \nonumber \sum_n \sqrt{ |e^{i{\theta_n}} + 1|^2}=\sum_n \sqrt{ (\cos({\theta_n}) + 1)^2+\sin^2(\theta_n)} \\ \nonumber
    = \sum_n \sqrt{ \cos^2({\theta_n})+\sin^2(\theta_n)+1+2\cos({\theta_n})} = \\ \nonumber
     \sqrt{2}  \sum_n \sqrt{ 1+\cos({\theta_n})} = 2 \sum_n \sqrt{ \cos^2({\theta_n/2})} =\\ \nonumber 2 \sum_n | \cos({\theta_n/2})|
     =  2 \| \cos({\boldsymbol{\theta}/2})\|_1
\end{gather}
We use these relations in eq. \ref{eq:optimization} for the derivation.

\newpage
\section{Histograms of Results - Planted Solutions}
\label{app:histPlanted}

This appendix presents the distributions of the estimated ground state energies across the tested problem instances using real relaxation model with a regularizer. In the provided histograms, the horizontal axis denotes the estimated ground state energy obtained at the end of the optimization process for each trial, while the vertical axis indicates the number of trials which resulted in each energy value.

\begin{figure}[H]
\centering
\includegraphics[width=0.32\linewidth]{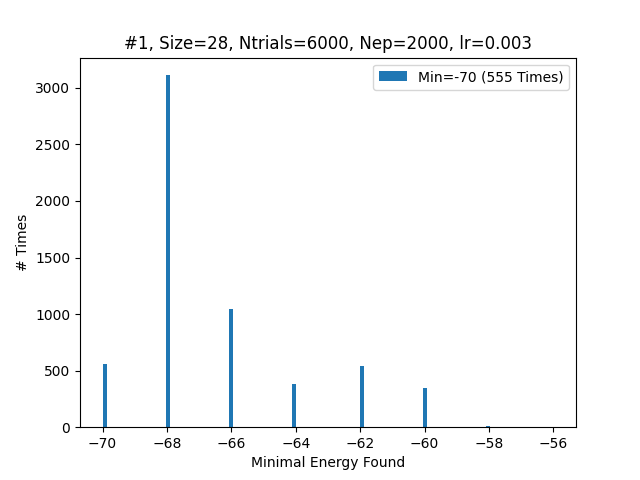}
\includegraphics[width=0.32\linewidth]{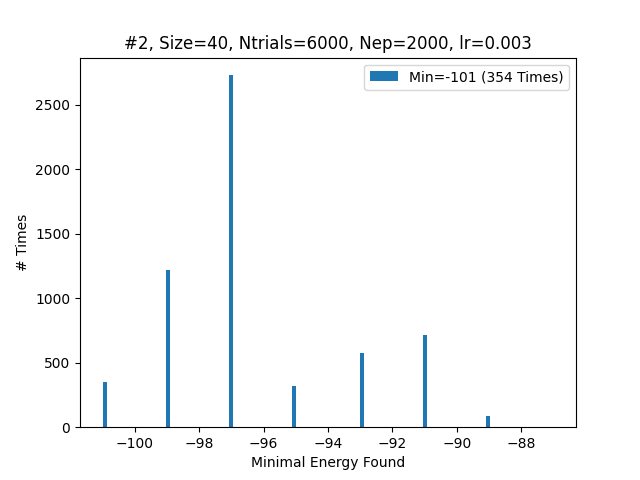}
\includegraphics[width=0.32\linewidth]{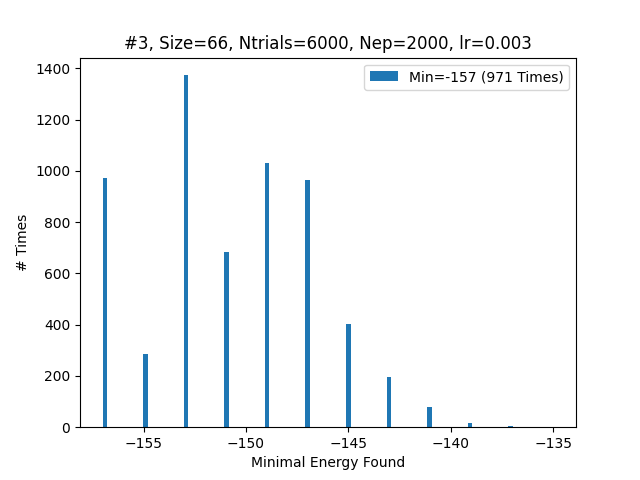}
\includegraphics[width=0.32\linewidth]{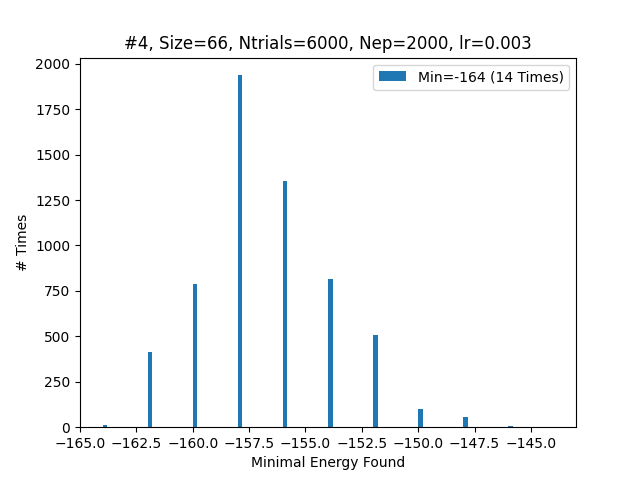}
\includegraphics[width=0.32\linewidth]{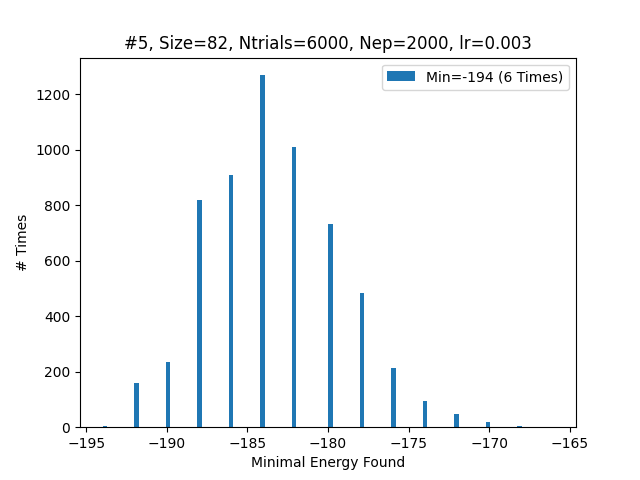}
\includegraphics[width=0.32\linewidth]{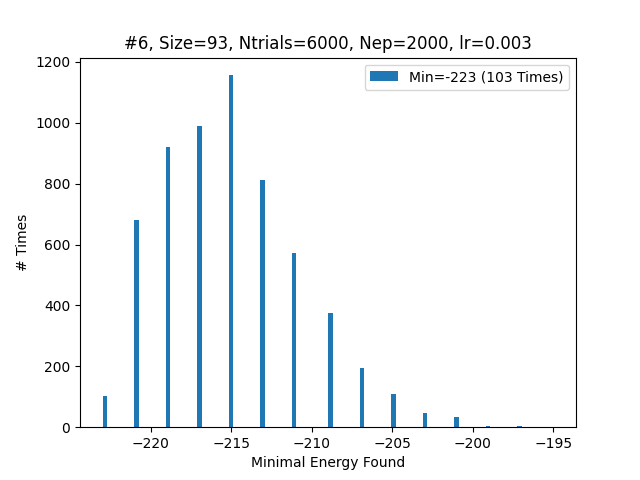}
\includegraphics[width=0.32\linewidth]{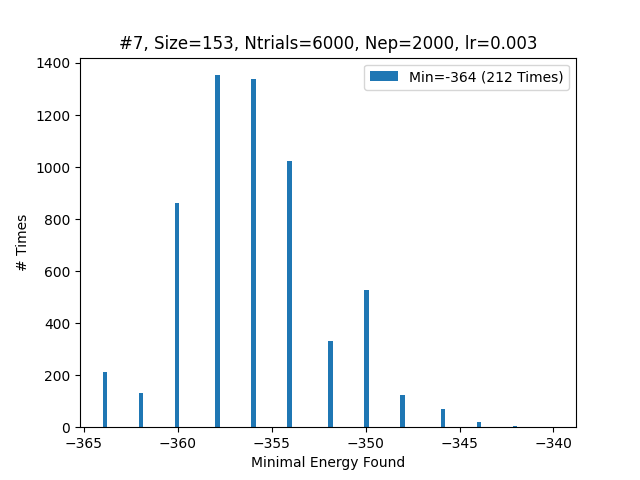}
\includegraphics[width=0.32\linewidth]{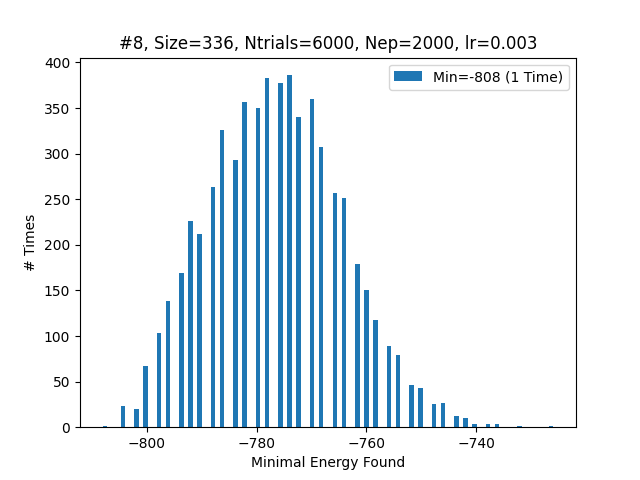}
\caption{Problems 1-8 - Histograms of energy distributions of the results for different trials.}
\label{pq_hist}
\end{figure}




\newpage
\section{Planted Solution Problem Example}
\label{app:additional_results}

Figure \ref{fig:J0} provides a graphical representation of the Hamiltonian for the smallest problem instance (Problem 1). In this visualization, the leftmost column depicts the local magnetic field vector $\mathbf{h}$, while the subsequent columns represent the elements of the spin-spin coupling matrix $\mathbf{J}$.
\begin{figure}[H]
\centering
\includegraphics[width=1.0\linewidth]{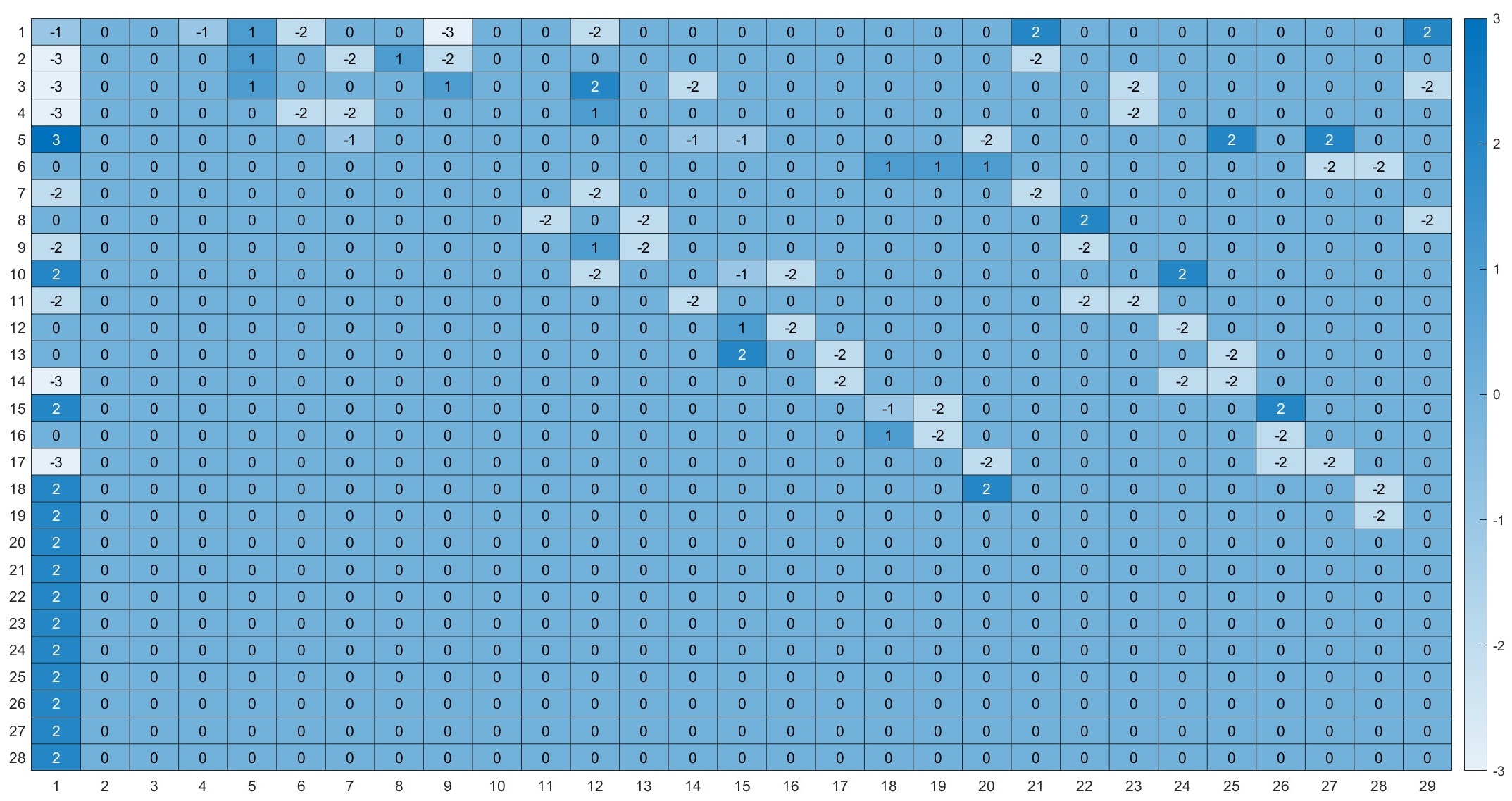}
\caption{Visualization of the local field vector $\mathbf{h}$ (first column) and the interaction matrix $\mathbf{J}$ (subsequent columns) for Problem 1. All the elements are integers from the range $[-3, 3]$.}
\label{fig:J0}
\end{figure}

\end{document}